\documentclass[pdflatex,sn-nature]{sn-jnl}

\usepackage{graphicx}%
\usepackage{multirow}%
\usepackage{amsmath,amssymb,amsfonts}%
\usepackage{amsthm}%
\usepackage{thmtools}
\usepackage{thm-restate}
\usepackage{mathrsfs}%
\usepackage[title]{appendix}%
\usepackage{xcolor}%
\usepackage{textcomp}%
\usepackage{manyfoot}%
\usepackage{booktabs}%
\usepackage{algorithm}%
\usepackage{algorithmicx}%
\usepackage{algpseudocode}%
\usepackage{listings}%
\usepackage{pgfplots}
\pgfplotsset{compat=1.18}
\usepgfplotslibrary{groupplots}
\usetikzlibrary{calc}

\usepackage{cleveref}

\usepackage{mathtools}
\usepackage{array}
\usepackage{tabularx}
\usepackage{wrapfig}
\usepackage{mathtools}

\newcommand{\mcest}[2]{\mathbb{E}_{#1}\left[#2\right]}
\newcolumntype{C}{>{\centering\arraybackslash}X}
\newcommand{\ReIm}{\operatorname{ReIm}}

\renewcommand{\Re}{\operatorname{Re}}

\DeclarePairedDelimiter\ket{|}{\rangle}
\DeclarePairedDelimiter\norm{\lVert}{\rVert}
\DeclarePairedDelimiter\abs{\lvert}{\rvert}

\makeatletter
\DeclarePairedDelimiterX{\braket@int}[2]{\langle}{\rangle}{#1 \delimsize\vert #2}

\NewDocumentCommand{\braket}{s o m}{%
  \kernel@ifnextchar\bgroup
    {\braket@two{#1}{#2}{#3}} 
    {\braket@one{#1}{#2}{#3}} 
}

\NewDocumentCommand{\braket@one}{m m m}{%
  \IfBooleanTF{#1}
    {\braket@int*{#3}{#3}}
    {\IfValueTF{#2}{\braket@int[#2]{#3}{#3}}{\braket@int{#3}{#3}}}%
}

\NewDocumentCommand{\braket@two}{m m m m}{%
  \IfBooleanTF{#1}
    {\braket@int*{#3}{#4}}
    {\IfValueTF{#2}{\braket@int[#2]{#3}{#4}}{\braket@int{#3}{#4}}}%
}
\makeatother

\DeclarePairedDelimiterX{\mel}[3]{\langle}{\rangle}{#1 \delimsize\vert #2 \delimsize\vert #3}

\theoremstyle{thmstyleone}%
\newtheorem{theorem}{Theorem}
\newtheorem{lemma}[theorem]{Lemma}

\theoremstyle{thmstyletwo}%

\theoremstyle{thmstylethree}%

\begin{document}

\title{Projected Inverse Iteration: An Eigenvalue Approach to Ground-State Computation with Neural Quantum States}

\author[1]{\fnm{Hang} \sur{Zhang}}

\author[1]{\fnm{Victor} \sur{Armegioiu}}

\author[1]{\fnm{Juan} \sur{Carrasquilla}}

\author[1]{\fnm{Siddhartha} \sur{Mishra}}

\author[2]{\fnm{Johannes} \sur{Müller}}

\author*[1]{\fnm{Jannes} \sur{Nys}}\email{jannys@ethz.ch}
\equalcont{These authors contributed equally to this work.}

\author*[1]{\fnm{Marius} \sur{Zeinhofer}}\email{marius.zeinhofer@math.ethz.ch}
\equalcont{These authors contributed equally to this work.}

\affil[1]{\orgname{ETH Zurich}, \country{Switzerland}}

\affil[2]{\orgname{TU Berlin}}

\abstract{Deep learning offers a powerful approach to quantum many-body problems via neural network wavefunctions, but their optimization remains a severe bottleneck. Existing optimization methods, including natural gradient descent and stochastic reconfiguration, suffer from spectral gap-dependent convergence that limits their effectiveness on systems fraught with competing orders and nearly degenerate ground states, such as frustrated magnets and strongly correlated electron materials. Here, we introduce Projected Inverse Iteration (PII) by re-framing the ground-state search as an eigenvalue problem. PII achieves rapid, gap-insensitive convergence while preserving the favorable polynomial computational scaling of stochastic reconfiguration. Demonstrated on challenging two-dimensional spin systems, including the highly frustrated $J_1$-$J_2$ model, PII outperforms standard optimization techniques and presents a promising algorithmic strategy for discovering complex quantum states in the presence of small spectral gaps. More broadly, PII can be interpreted as a novel natural gradient method tailored for eigenvalue problems, opening up its application to related challenges within deep learning.
}

\maketitle

\section{Introduction}\label{sec:introduction}

Optimization algorithms are a foundational facet of recent advances in artificial intelligence (AI), alongside novel neural architectures and the scaling of data and compute. They are essential for making large-scale learning practically achievable. Powerful optimization methods range from widely used first-order methods like momentum~\cite{qian1999momentum} and Adam~\cite{kingma2014adam} to computationally heavier approximate second-order methods like K-FAC~\cite{martens2015optimizing}, Shampoo~\cite{gupta2018shampoo}, and SOAP~\cite{vyas2024soap} that offer improved convergence.

This paradigm naturally extends to AI for physics and chemistry. A prominent example is the use of neural network wavefunctions to solve complex many-body problems in condensed matter physics, nuclear physics, and quantum chemistry~\cite{carleo2017solving, wu2024variational, astrakhantsev2021broken, PhysRevResearch.2.023358,mossLeveragingRecurrenceNeural2025, pescia2024message, viteritti2023transformer, chen2024empowering, denis2025accurate, linteau2025phase, pfau2020ab, hermann2020deep, foster2025ab, von2022self, scherbela2024towards, li2024computational, robledo2022fermionic, kim2024neural, lou2024neural, pfau2024accurate}. 
For most architectures, standard first-order stochastic methods such as Adam are well known to perform poorly, making natural gradient schemes (independently discovered and coined stochastic reconfiguration~\cite{sorella1998green}, or SR) the preferred choice. However,
improving this scheme has become a critical focal point; in the last two years, it has become apparent that to achieve breakthroughs, progress in neural architectures must be matched with parallel advancements in optimization methods~\cite{chen2024empowering, rende2024simple, neklyudov2023wasserstein, webber2022rayleigh, goldshlager2024kaczmarz, drissi2024second, peng2025analysis, jiang2025neural, gu2025solving}. Just as Adam revolutionized the field of deep learning, it stands to reason that the development of novel physics-informed optimization schemes may be the key to solving previously intractable quantum problems.

In this manuscript, we focus on the challenge of making \emph{eigensolvers} tractable for large-scale problems in quantum mechanics far beyond the problems reachable with exact methods. In this regime, even state-of-the-art matrix-free methods exhaust the capacity of the world's largest supercomputers. We focus on the core eigenvalue problem of quantum mechanics, the stationary Schrödinger equation
\begin{align}
    \hat{H} \ket{\psi} = E \ket{\psi}. \label{eq:schrodinger}
\end{align}
for a Hamiltonian operator $\hat{H}$ and quantum state $\ket{\psi}$ with energy $E$. While the ground state approximation of Eq. ~\eqref{eq:schrodinger} is typically formulated as an energy minimization problem, here we show that directly viewing it as an eigenvalue problem offers a powerful new perspective. More precisely, our contributions are the following:
\begin{itemize}
    \item \textbf{We introduce Projected Inverse Iteration (PII)}, adapting classical inverse iteration to large-scale quantum systems via a Galerkin projection onto the tangent space of the variational manifold. We empirically demonstrate PII's efficacy on challenging two-dimensional spin systems, where it converges significantly faster than Stochastic Reconfiguration---the current gold standard---while preserving the same favorable computational scaling.
    \item \textbf{We establish a principled theoretical framework} for analyzing the convergence of algorithms in ground-state computations. Our analysis identifies SR as a Galerkin-projected variant of Riemannian gradient descent, explaining its linear convergence rate and severe degradation for closing spectral gaps. In contrast, PII explicitly leverages the eigenvalue structure to bypass this bottleneck, providing a highly robust and efficient training method for neural network wavefunctions in the presence of closing spectral gaps (as illustrated in Figure \ref{fig:overview}).
\end{itemize}

The development of PII in this work can be seen as a specialization of the \emph{infinite-dimensional} viewpoint on optimization in scientific machine learning proposed in \cite{mullerposition}. In this context, infinite-dimensional refers to the choice of inverse iteration as a well-suited method for solving Eq.~\eqref{eq:schrodinger} over a general function space, independent of a parametrization. When we refer to function space or the non-parametric setting, we always mean the intractably large space, irrespective of it being $\mathbb{C}^{2^N}$ in spin systems, or the subspace of $L^2(\mathbb{R}^{3N})$ consisting of anti-symmetric functions in electronic structure.
Consequently, PII applies to general non-linearly parametrized eigenvalue problems. In fact, in machine learning parlance, PII can be viewed as a novel natural gradient method that is tailored for the application to eigenvalue problems. We thus anticipate applications of PII beyond its demonstrated utility in quantum physics.

\begin{figure}
    \centering
    \includegraphics[width=\textwidth]{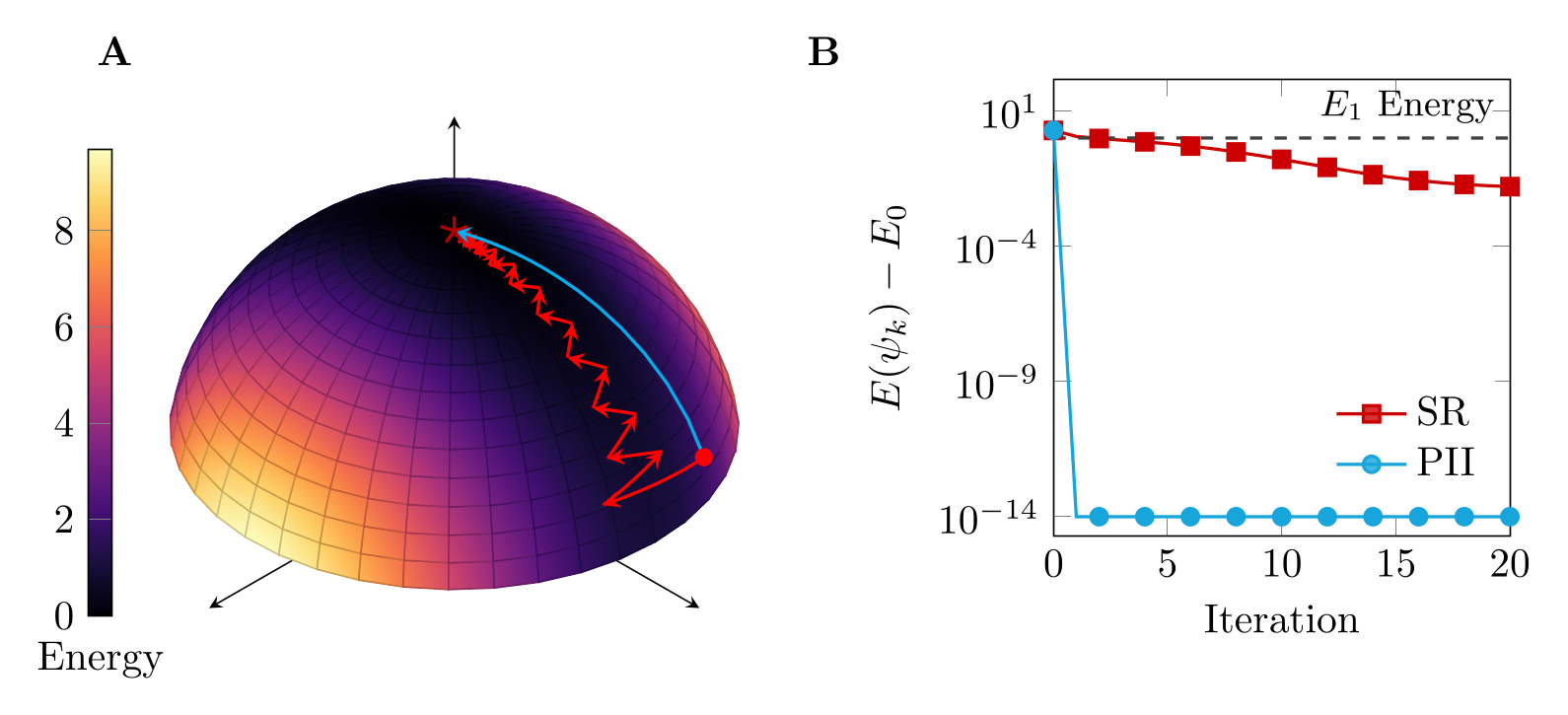}
    \caption{Visualization of SR (red) and PII (cyan) for the Hamiltonian
    $\hat H = \operatorname{diag}(1, 10, 0)$ and a linear normalized state vector on the sphere $\mathbb{S}$. Figure A visualizes the optimization dynamics on the sphere, and Figure B shows the corresponding energy error. PII converges in essentially a single step for the shift value $\tau=10^{-8}$. SR with learning rate $\eta=0.1$ converges slowly with oscillating iterates, due to the large eigenvalue spread ($\Gamma = 10$) and the small spectral gap ($\Delta=1$).}
    \label{fig:overview}
\end{figure}

\section{Results}\label{sec:results}
We introduce the core PII algorithm in the next section, and demonstrate its properties on a benchmark example, as well as on large-scale two-dimensional systems with important Hamiltonians central to condensed matter physics~\cite{wu2024variational}.

\subsection{Projected inverse iteration}
Consider a variational approximation $\psi_\theta$ (typically a neural network~\cite{carleo2017solving}) of a quantum state with parameters $\theta$.
To solve the time-independent Schrödinger equation in Eq.~\eqref{eq:schrodinger}, variational methods cast the problem into an optimization problem of the energy loss function given by the Rayleigh quotient
\begin{equation}\label{eq:energy}
    \min_{\psi_\theta}E(\psi_\theta) 
    =
    \frac{\mel{\psi_\theta}{\hat{H}}{\psi_\theta}}{\braket{\psi_\theta}}.
\end{equation}
The denominator translates the fact that valid quantum states are normalized ($\ket{\smash{\hat{\psi}}}=\ket{\psi}/\norm{\psi}$ in the $L_2$ norm $\norm{\psi} = \sqrt{\braket{\psi}}$), i.e.\ they live on the unit sphere $\mathbb{S} = \{\hat{\psi} \in \mathcal{H} : \norm{\smash{\hat{\psi}}}=1\}$. Hence, the Rayleigh quotient in Eq.~\eqref{eq:energy} is insensitive to the normalization.
The above objective is typically optimized using gradient-based optimization algorithms such as Adam~\cite{kingma2014adam,PhysRevResearch.2.023358}, K-FAC~\cite{martens2015optimizing, pfau2020ab}, or stochastic reconfiguration (SR). While successful in many contexts, these minimization methods do not fundamentally exploit the underlying eigenvalue problem represented by the Schrödinger equation in Eq.~\eqref{eq:schrodinger}.

To respect the eigenvalue structure, we design a projected version of inverse iteration---the canonical algorithm in numerical linear algebra for computing eigenstates given an eigenvalue guess \cite{trefethen2022numerical}. The projection step turns the intractable non-parametric inverse iteration scheme into a scalable iterative algorithm similar to a preconditioned gradient scheme. To translate this approach to the context of ground-state computations in quantum mechanics, we consider a trial $\ket{\psi_k}$ at iteration $k$ and a shift parameter $\tau$ that serves as an approximation of the ground-state energy. In a non-parametric setting, inverse iteration updates the quantum state via
\begin{equation}
    \ket{\hat{\psi}_{k+1}} 
    =
    \frac{(\hat{H} - \tau \hat{I})^{-1}\ket{\hat{\psi}_k}}{\norm{(\hat{H} - \tau \hat{I})^{-1}\ket{\hat{\psi}_k} }}.\label{eq:inverse_iteration_functional_update}
\end{equation}
We rewrite the above into a Riemannian scheme, iteratively updating the current state with tangential directions and retracting back to the sphere via normalization. This yields
\begin{align}
    \ket{\hat{\psi}_{k+1}} 
    =
    \frac{\ket{\hat{\psi}_k} - \ket{d_k}}{\norm{\ket{\hat{\psi}_k} - \ket{d_k}}}, 
    \quad \ket{d_k}
    =
    \ket{\hat{\psi}_k}
    -
    \frac{(\hat H - \tau \hat{I})^{-1}\ket{\hat{\psi}_k}}{\mel{\hat{\psi}_k}{(\hat H - \tau \hat{I})^{-1}}{\hat{\psi}_k}}.\label{eq:tangent_inverse_iter_step_non_parametric}
\end{align}
where $d_k\in T_{\psi_k}\mathbb S$ is tangent to the current iterate $\hat{\psi}_k$, i.e.\ $\braket{d_k}{\hat{\psi}_k} = 0$. 
In the example in Fig.~\ref{fig:overview}, we explicitly depict the update vector $d_0$ on the sphere for the first step ($k=0$). At first glance, this reformulation appears impractical. Direct computation of $d_k$ is infeasible for many-body systems due to the exponential growth of the Hilbert space, and introducing Riemannian geometry may seem like an unnecessary complication. However, this geometric perspective is the crucial stepping stone for deriving a scalable, projected variant. Specifically, our method builds on the realization that a Galerkin projection of this infinite-dimensional Riemannian update onto the tangent space of a variational model leads to a highly tractable algorithm. Moreover, this approach maintains the same polynomial computational scaling as SR per iteration (for details, see Methods).
Galerkin projection turns the inverse iteration update~\eqref{eq:tangent_inverse_iter_step_non_parametric} into a tractable preconditioned gradient scheme in parameter space
\begin{equation}
    \theta_{k+1}
    =
    \theta_k
    -
    Q^{-1}(\theta_k)\left[\tfrac12 \nabla_\theta E(\psi_{\theta_k})\right],\label{eq:preconditioned_gradient_update_PII}
\end{equation}
where $\nabla_\theta E(\psi_{\theta_k})$ is the energy gradient, and $Q$ is a preconditioner that can be estimated efficiently with Monte Carlo integration. While we use normalized states in our notation, our method works equally well for unnormalized states. For details, we refer to the Methods. The matrix $Q$ depends on the energy shift $\tau$, and is constructed from the $\tau$-shifted Hamiltonian operator projected into the tangent space of the current iterate $\hat{\psi}_k$. Moreover, for the choice $\tau=E(\psi_k)$, the iteration \eqref{eq:inverse_iteration_functional_update} is known as \emph{Rayleigh quotient iteration}, and its equivalent Riemannian formulation in Eq.~\eqref{eq:tangent_inverse_iter_step_non_parametric} agrees with the Riemannian Newton method. We expand on this connection in Appendix~\ref{inverse_iter_as_newton}.

\subsection{Convergence guarantees}
To analyze PII in comparison to SR, we consider their non-parametric formulations. This setting isolates the underlying dynamics and has the advantage of being agnostic to the specific parametric ansatz. Extending this analysis to include neural quantum states and the effect of Monte Carlo sampling remains an open question beyond the scope of this work. Strictly speaking, our computationally tractable algorithms in parameter space are Galerkin-projected and Monte Carlo-estimated variants of their functional counterparts. While we refer to them as SR and PII for simplicity, their underlying non-parametric methods are Riemannian gradient descent and inverse iteration, see also Methods.

Let $E_0$ and $E_{\text{max}}$ denote the smallest and largest eigenvalues of $\hat H$, respectively. We define the first excited state energy, $E_1$, as the minimal eigenvalue of $\hat H$ strictly greater than $E_0$. Our analysis explicitly covers degenerate ground states.
For any iterative sequence $\hat{\psi}_k$ converging to a target ground state $\hat{\psi}^*$, we analyze the functional convergence rate $\rho < 1$ for which  $\norm{\hat \psi_{k} - \hat \psi^*}_2 \leq c \rho^k$ for a constant $c$ independent of $k$. 
The convergence of SR as a functional algorithm is governed by the rate
\begin{equation}
    \rho_{\text{SR}} 
    =
    \left| 
        1 - \frac12\frac{E_1 - E_0}{E_{\text{max}} - E_0}
    \right|.\label{eq:rho_SR}
\end{equation}
For concreteness this rate is stated for $\eta = 1/(2(E_{\text{max}} - E_0))$. Crucially, the convergence rate of SR in Eq.~\eqref{eq:rho_SR} deteriorates when the spectral spread $\Gamma = E_{\text{max}} - E_0$ is large or the spectral gap $\Delta = E_1 - E_0$ is small. In these cases, SR converges with a slow linear rate, as illustrated in Figure~\ref{fig:overview} or in more detail in the Appendix~\ref{sec:graphical_visualization}. Moreover, SR possesses a critical step-size threshold of $\eta=1/(E_{\text{max}} - E_0)$ beyond which the algorithm diverges, see Figure [B]. The sensitivity of the SR step-size to the spectrum ($\eta<\Gamma^{-1}$) requires a manual step-size tuning for every quantum system. 

Inverse iteration, on the other hand, always uses a step-size of $\eta=1$ and converges for all shift parameters $\tau < \frac{E_1 + E_0}{2}$ at the rate
\begin{equation}
    \rho_{\text{II}}
    =
    \left|
        1 - \frac{E_1 - E_0}{E_1 - \tau}
    \right| = \left|\frac{E_0 - \tau}{E_1 - \tau}\right|. \label{eq:rho_II}
\end{equation}
For $\tau = E_0+\alpha (E_1-E_0)$ we obtain the convergence rate $\rho_{\textup{II}} = \frac{\alpha}{1-\alpha}$ thereby removing all dependency on the spectrum of $\hat H$. 
Further, the convergence factor $\rho_{\text{II}}$ in Eq.~\eqref{eq:rho_II} can be made arbitrarily fast as $\tau\to E_0$, see Figure~\ref{fig:gap_sensitivity}.  
In the limit $\tau \to E_0$, inverse iteration converges typically in a single step as illustrated in Figure~\ref{fig:overview}. 
Nevertheless, using a value $\tau \neq E_0$ induces a slowdown of convergence, which we demonstrate in Figure~\ref{fig:undershoot}. However, we will demonstrate in the next sections that $\tau < E_0$ can serve as a reliable regularization when used with stochastic Monte Carlo estimation of the quantities involved. When $\tau_k = E(\psi_k)$ is chosen adaptively as the current energy estimate, the method is known as the eigensolver Rayleigh quotient iteration and converges cubically in the vicinity of the ground state~\cite{trefethen2022numerical}. While the convergence analysis in this section is for the non-parametric setting, we experimentally show that these properties to a large extent carry over to expressive variational models. In practice, this means PII is much more robust to small gaps compared to SR. In all our experiments in the next section, including linear models, full Hilbert sum computations, and highly nonlinear deep neural networks on large $10 \times 10$ models with Monte Carlo, we clearly observe significantly less gap sensitivity in agreement with the theory. 

In certain applications, extremely small spectral gaps can occur. For any method to resolve these, one needs access to highly accurate energies, gradients, and preconditioners beyond those achievable with Monte Carlo estimation. Resolving such gaps remains out of reach for both SR and PII. However, PII can resolve moderately small gaps much better than SR.

\begin{figure}[t]
  \centering
  \includegraphics{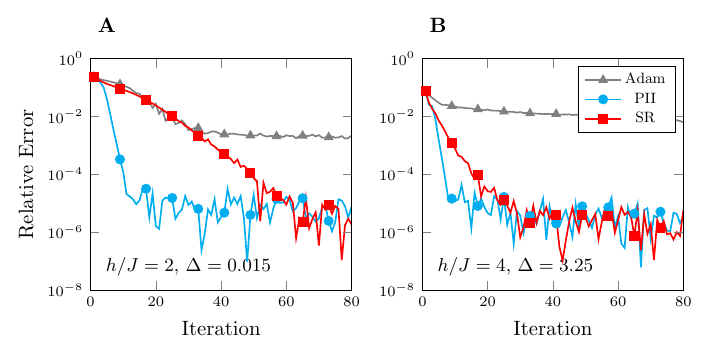}
  \caption{\textbf{Spectral-gap dependence distinguishes PII from SR.}
  \textbf{A,} Optimization trajectories for the $4\times4$ TFIM in a regime with a small spectral gap ($h/J=2$,  $\Delta=0.015$), plotted as relative error $\varepsilon$ versus iteration for PII, SR and Adam, where $\varepsilon = \left|\left(E_{\mathrm{ViT}} - E_{\mathrm{ED}}\right)/E_{\mathrm{ED}}\right|$. All runs use the same number of Monte Carlo samples ($M=32{,}768$) and a positive real-valued ViT ansatz (4 layers, hidden dimension $d=20$, and $h=2$ heads). PII uses a learning rate $\eta=1.0$ and Tikhonov regularization $\lambda=0.01$, whereas SR uses $\eta=0.01$ and $\lambda=10^{-4}$. Adam is included as a standard first-order baseline, with its best learning rate $\eta=0.01$.
  \textbf{B,} The same comparison in a larger-gap regime ($h/J=4$, $\Delta=3.25$). PII shows little gap dependence, whereas SR slows down for the smaller gap and Adam remains at higher relative error over the plotted iterations.}

  \label{fig:spectral_gaps}
\end{figure}

\subsection{Effect of spectral gap on convergence}

In this section, we experimentally confirm the favorable convergence properties of PII on systems with small spectral gaps, both for a linearized state-vector setting and a nonlinearly parametrized neural-network ansatz. For the linear state vector, we visualize the behavior of SR for a moderately small spectral gap in Fig.~\ref{fig:overview}. The effect of further shrinking the gap is visualized in Appendix~\ref{sec:graphical_visualization} in Figures \ref{fig:gap_sensitivity} and \ref{fig:stability}. We observe that closing spectral gaps lead to a slow oscillatory behavior of the iterates in Hilbert space. This is the familiar behavior of a gradient descent method in a narrow loss valley, bouncing rapidly between the valley walls. This interpretation is further strengthened through the analysis of SR as a projected Riemannian gradient descent method in section \ref{sec:riemannian_interpretation}. PII, on the other hand convergence in a single step for all these examples, independent of the gap sizes considered.

We further benchmark PII against SR and Adam in a parametric setting using a variational ansatz for a small $4 \times 4$ transverse-field Ising model (TFIM), for which the spectral gap can be evaluated by exact diagonalization. The Hamiltonian is
\begin{align}
\hat{H}_{\mathrm{TFIM}}=-J \sum_{\langle i, j\rangle} \hat{\sigma}_i^z \hat{\sigma}_j^z-h \sum_i \hat{\sigma}_i^x, 
\end{align}
where $\langle i, j\rangle$ denotes nearest neighbours, and $\hat{\sigma}^{x, y, z}$ are the Pauli matrices. 

Figure~\ref{fig:spectral_gaps} compares the convergence speed in a coupling regime with different gap sizes $\Delta = E_1 - E_0$. We consider the gaps (small)  $\Delta =0.015$ at $h / J=2$, and (large) $\Delta=3.25$ at $h / J=4$, respectively. The experiments are carried out with a deep neural network ansatz, in particular a non-linear ViT ansatz~\cite{viteritti2023transformer}. We include Adam as a widely used adaptive first-order baseline, which relies only on local gradient information and does not use the preconditioning employed by SR or PII. For SR and PII, the respective preconditioners are regularized with a Tikhonov regularization\footnote{We will set the units of energy $J=1$ everywhere, and drop the scale in SR learning rates and PII regularizations for simplicity of notation.} $Q(\theta) \to Q(\theta) + \lambda \mathbb{I}$ when taking the inverse in Eq.~\eqref{eq:preconditioned_gradient_update_PII} (see Eq.~\eqref{eq:regularization} in Methods). For fair comparison, we run SR with the largest possible learning rate $\eta$ and the smallest possible regularization $\lambda$ that allows for a stable optimization (see Supplemental Material).

At $h / J = 2$, the $4 \times 4$ TFIM corresponds to the ordered ferromagnetic regime of the two-dimensional model. In the thermodynamic limit, this phase spontaneously breaks the global $\mathbb{Z}_2$ spin-flip symmetry, giving two symmetry-related ground states. On a finite periodic lattice, this degeneracy is lifted by a finite-size spectral gap $\Delta$, which decreases exponentially with system size. PII reduces the relative error of the variational energy to $\sim 10^{-4}$ within $10$ iterations in both regimes (Fig.~\ref{fig:spectral_gaps}a,b). By contrast, SR slows markedly as the gap decreases, requiring substantially more iterations to reach the same accuracy. The resulting trajectories distinguish an almost gap-insensitive optimization regime for PII from a gap-limited regime for SR. These results indicate that the qualitative conclusions based on theoretical non-parametric convergence rates in Eqs.~\eqref{eq:rho_SR}-\eqref{eq:rho_II} persist for neural quantum states optimized with Monte Carlo sampling.

\subsection{Effect of the energy shift $\tau$}\label{sec:undershoot}

\begin{figure}[tb]
  \centering
  \includegraphics{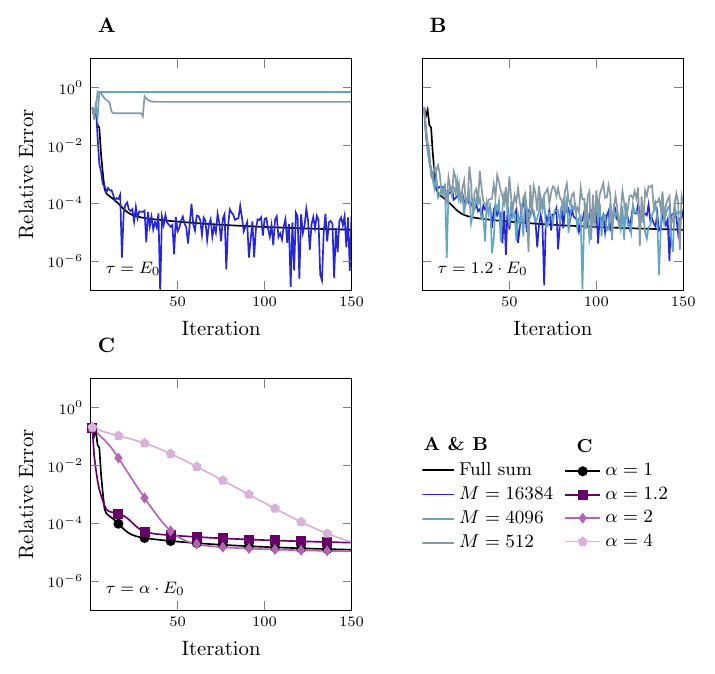}
  \caption{\textbf{Shift undershooting stabilizes PII in a quasi-degenerate TFIM.}
  Relative ground-state energy error versus iteration for the $4\times4$ TFIM at $h/J=2$.
  \textbf{A,} Without undershooting ($\tau=E_0$), Monte Carlo noise prevents PII from converging for small batch sizes ($M=512$ and  $M=4096$).
  \textbf{B,} With undershooting ($\tau=1.2\,E_0$),  trajectories stay stable across batch sizes.
  \textbf{C,} In the noise-free full-sum limit, sweeping $\tau=\alpha E_0$ follows the predicted contraction $r_{\mathrm{PII}}$, showing a direct trade-off between robustness and speed. 
  All trajectories use $\lambda=0.1$ and $\eta=1.0$. All results are produced with a real-valued RBM ansatz (hidden-variable density $\alpha=4$~\cite{carleo2017solving}).}
  \label{fig:undershoot_batch_sizes}
\end{figure}

Theoretically, the optimal convergence in PII is expected when the shift $\tau$ matches the true ground state, i.e., $\tau = E_0$ in  Eq.~\eqref{eq:inverse_iteration_functional_update}. However, taking $\tau < E_0$, which we refer to as \emph{undershooting} the energy, can help stabilize the preconditioning at a fixed Monte Carlo budget per iteration, at the cost of a reduced theoretical convergence rate in Eq.~\eqref{eq:rho_II}. This approach is different from the standard Tikhonov regularization approach ($\lambda$) also used in SR.

In Fig.~\ref{fig:undershoot_batch_sizes}a,b, we consider the 2D TFIM at $h/J=2$ on a $4\times4$ lattice. Without undershooting, i.e.\ $\tau=E_0$ is the exact ground state energy obtained with ED (Fig.~\ref{fig:undershoot_batch_sizes}A), PII shows rapid convergence, but requires a higher number of Monte Carlo samples for convergence. For $M=512$ and $4096$, the optimization fails, likely to an inaccurately empirical preconditioner in Eq.~\eqref{eq:preconditioned_gradient_update_PII}. In contrast, at $M=16384$ the method converges smoothly, indicating effective preconditioning. 

Introducing mild undershooting substantially improves the robustness of PII (Fig.~\ref{fig:undershoot_batch_sizes}B): across the same range of $M$, PII becomes far less sample-size dependent and converges reliably even in the low-sample regime. We attribute this stabilization to improved spectral conditioning of the preconditioner. Increasing $\tau$ shifts the spectrum of $Q$ towards larger values, which heuristically makes it easier for finite-sample estimates to retain (approximate) positive semi-definiteness in Eq.~\eqref{eq:Q_matrix}.

To disentangle this effect from Monte Carlo noise, we further evaluate PII in a noiseless setting by replacing stochastic estimators with a full Hilbert space contraction (full sum for short). In this deterministic limit (see Fig.~\ref{fig:undershoot_batch_sizes}c), undershooting is no longer expected to provide a stability advantage, but still induces the slowdown of convergence observed in Eq.~\eqref{eq:rho_II}. As $\tau$ moves away from $E_0$, the ratio approaches unity, increasing $\rho_{\mathrm{II}}$ and weakening the per-iteration contraction. These results expose a stability-speed trade-off that can be exploited in PII: undershooting mildly reduces the convergence speed, but significantly improves numerical robustness under stochastic estimation of the preconditioner $Q$.

\subsection{Large scale results}

\begin{figure}[tb]
    \centering
    \includegraphics{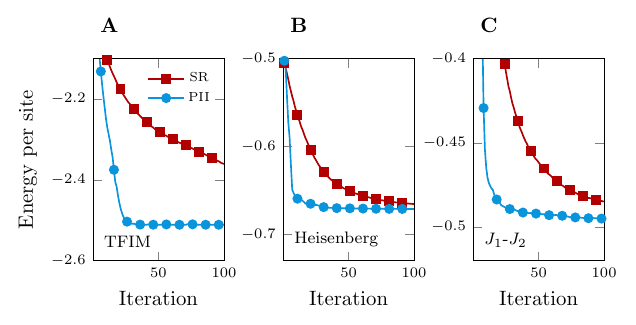}
    \caption{\textbf{PII acceleration at large $10\times10$ lattices.}
  Energy per site during the optimization, comparing SR and PII for three canonical 2D spin-$\tfrac{1}{2}$ models on a torus:
  \textbf{A,} TFIM at $h/J=2$;
  \textbf{B,} unfrustrated Heisenberg antiferromagnet ($J_2=0$);
  \textbf{C,} frustrated $J_1$--$J_2$ model at $J_2/J_1=0.5$. Stoquastic cases \textbf{A,B} are optimized with a positive real ViT ansatz (4 layers; $d=20$, $h=2$), whereas the frustrated point \textbf{C,} uses a complex-valued ViT to capture the emergent sign structure (4 layers; $d=60$, $h=10$; see Methods). PII uses $\eta=1.0$ with diagonal shift $10^{-3}$ (TFIM), $\eta=1.0$  with shift $10^{-2}$ (Heisenberg), and $\eta=1.0$ with shift $10^{-1}$ at $J_2/J_1=0.5$, together with mild overshoot control.}
    \label{fig:10x10_systems}
\end{figure}

We now apply PII to a number of large-scale experiments far beyond the reach of exact methods (due to a Hilbert-space size of $10^{30}$), and of significant importance for condensed matter physics~\cite{wu2024variational}. We consider three canonical two-dimensional spin-$\frac{1}{2}$ models on $10\times10$ square lattices with periodic boundary conditions: the TFIM, the (unfrustrated) Heisenberg antiferromagnet, and the highly frustrated $J_1$--$J_2$ model. The latter reads
\begin{align}    
\hat{H}_{J_1\text{--}J_2}=J_1 \sum_{\langle i, j\rangle} \hat{\boldsymbol{S}}_i \cdot \hat{\boldsymbol{S}}_j+J_2 \sum_{\langle\langle i, j\rangle\rangle} \hat{\boldsymbol{S}}_i \cdot \hat{\boldsymbol{S}}_j \label{eq:j1j2},
\end{align}
where $\langle\langle i, j\rangle\rangle$ denotes next-nearest neighbours, and $\boldsymbol{S}_i = [\hat{\sigma}_i^x, \hat{\sigma}_i^y, \hat{\sigma}_i^z]$. The Heisenberg model corresponds to $J_2/J_1=0$ in Eq.~\eqref{eq:j1j2}. Near the frustrated point $J_2/J_1=0.5$, the phase diagram in the thermodynamic limit remains highly debated~\citep{Jiang2012PRB,Gong2014PRL,Ferrari2020PRB,Richter2014arXiv}.

The TFIM and the Heisenberg Hamiltonian are stoquastic, i.e.\ can be cast into a form with real and negative off-diagonal elements. This implies that their ground states can be chosen real and nonnegative. In contrast, near $J_2/J_1 \simeq 0.5$ the $J_1$--$J_2$ model develops a nontrivial sign structure and is challenging even for state-of-the-art ground-state solvers. Capturing the sign structure requires an expressive and highly non-linear complex-valued ansatz $\psi_\theta$.
For TFIM at $h/J=2$ and the Heisenberg limit $J_2/J_1=0$, we use the same positive ViT as in previous experiments. For the highly frustrated point $J_2/J_1=0.5$, we use a wider and complex-valued neural network.

Figure~\ref{fig:10x10_systems} shows the convergence of PII versus standard SR for the three Hamiltonians. PII reaches the target energy in substantially fewer iterations than SR for the TFIM and Heisenberg benchmarks (Fig.~\ref{fig:10x10_systems}a,b), which is consistent with SR becoming spectrum-limited in systems with small energy gaps. On $10\times10$, SR typically requires $\mathcal{O}(10^2)$ iterations to reach the target energy, whereas PII converges within only a few dozen iterations ($\sim 20$--$30$) across the same settings.

We clarify that, in the ordered TFIM, the finite-size splitting between the two lowest states becomes exponentially small with system size. Because these states form a quasi-degenerate ground state, resolving them with single-state methods is practically impossible. Consequently, convergence to the target energy is dominated by the gap to the next true excited state. When this gap is small, SR enters a gap-limited regime. Similar effects appear in the Heisenberg antiferromagnet, while in the $J_1$-$J_2$ model, frustration lifts the quasi-degeneracy and can remain significant for large system sizes: $\mathcal{O}(1)$.

\section{Methods}\label{sec:methods}
\subsection{Riemannian functional optimization perspective}\label{sec:riemannian_interpretation}
Before describing the projection step that converts II into PII, we first analyze energy optimization from a Riemannian perspective. We hereby focus on infinite-dimensional functional updates, or said differently, updates of non-parametric quantum states in the Hilbert space. The Riemannian gradient of the energy $E(\hat{\psi}) = \mel{\hat{\psi}}{H}{\hat{\psi}}$ on the sphere $\mathbb{S}$ reads
\begin{align}
    \operatorname{grad}_{\mathbb{S}} E (\hat \psi)
    =
    2(\hat{H} - E(\hat{\psi}) \hat{I})\ket{\hat{\psi}}
\end{align}
Hence, Riemannian gradient descent on the sphere with step-size $\eta>0$ and retraction by normalization \cite{boumal2023introduction} is given by the recursion
\begin{align}
    \ket{\hat{\psi}_{k+1}} 
    &=
    \frac{\ket{\hat{\psi}_k} - \eta\operatorname{grad}_\mathbb SE(\hat{\psi}_k)}{\norm{\ket{\hat{\psi}_k} - \eta\operatorname{grad}_\mathbb SE(\hat{\psi}_k)}}. \label{eq:SR_Riemannian}
\end{align}
In other words, the Riemannian gradient descent update in Hilbert space retracts the following update direction
\begin{align}
    \begin{split}
        \ket{d_k} 
        &= \eta\operatorname{grad}_\mathbb{S} E(\hat{\psi}_k) \\
        &= 
        2\eta (\hat H   - E( \hat{\psi}_k) \hat{I}) \ket{\hat{\psi}_k}
    \end{split}
\end{align}
back onto the sphere $\mathbb S$.
We will demonstrate in the next section that SR can be understood as a Galerkin-projected version of this Riemannian gradient descent scheme.

We note in passing that the Riemannian energy gradient is the same as the gradient of the Rayleigh quotient in Eq.~\eqref{eq:energy}: $\operatorname{grad}_{\mathbb{S}} E = \nabla_{\psi} E$. This implies that Riemannian gradient descent agrees with gradient descent on the Rayleigh quotient up to the retraction step.

\subsection{Galerkin projections and inverse iteration}
While the previous section considered functional updates, we now study these updates in a basis determined by a parametrized quantum state.
For a normalized neural network wavefunction $\hat{\psi}_\theta=\psi_\theta/\|\psi_\theta\|$ with parameters $\theta\in\mathbb R^P$, we consider its local linearization given by
\begin{align}
    \mathcal V_\theta 
    =
    \operatorname{span} \{ \partial_{\theta_i} \hat{\psi}_\theta \}_{i=1}^P 
    \subset 
    T_{\psi_\theta}\mathbb{S},
\end{align}
which changes throughout the optimization.
As detailed below, we obtain PII and SR by orthogonally projecting their functional schemes onto these finite-dimensional spaces. Any solution $\xi\in\mathbb R^P$ to the linear system
\begin{align}
    \sum_j b(\partial_{\theta_i}\hat\psi_{\theta}, \partial_{\theta_j}\hat\psi_{\theta}) \, \xi_j = b(\partial_{\theta_i}\hat\psi_{\theta}, d_k) \label{eq:general_linear_system}
\end{align}
represents the orthogonal projection of $\ket{d_k}$ with respect to the inner product $b$ into the space $\mathcal V_\theta$ in the generating system $\partial_{\theta_1}\hat\psi_\theta, \dots, \partial_{\theta_1}\hat\psi_\theta$. Such a projection is called a \emph{Galerkin projection}.

To derive PII from II, we consider the inner product $b(v, w) = \mel{v}{\hat{H} - \tau \hat{I}}{w}$ and project the inverse iteration direction $d_k$ in Eq.~\eqref{eq:tangent_inverse_iter_step_non_parametric} onto $\mathcal V_\theta$ with respect to the inner product $b$. 
This choice comes from the fact that if we choose $w = d_k$, the intractable $(\hat{H} - \tau \hat{I})^{-1}$ in the functional II update in Eq.~\eqref{eq:tangent_inverse_iter_step_non_parametric} is annihilated. This will allow us to make II tractable through projection. Computing $b$ and $d_k$ in the generating system of $\mathcal V_\theta$ yields
\begin{align}
b(\partial_{\theta_i}\hat\psi_{\theta}, \partial_{\theta_j}\hat\psi_{\theta}) 
    &=
\mel{\partial_{\theta_i}\hat\psi_{\theta}}{\hat H - \tau I}{\partial_{\theta_j}\hat\psi_{\theta_j}}
    =
    Q(\theta)_{ij}
    \\
    b(\partial_{\theta_i}\hat\psi_{\theta}, d_k) 
    &=
    \mel{\partial_{\theta_i}\hat\psi_{\theta}}{\hat H - \tau I}{d_k}
    =
    \tfrac12\partial_{\theta_i}E(\psi_{\theta})
\end{align}
where we introduced the matrices
\begin{align}
    Q(\theta) &= 
    H(\theta) - \tau S(\theta) \label{eq:Q_matrix} \\
    S(\theta)_{ij} 
    &=
    \braket{ \partial_{\theta_i} \hat\psi_\theta}{ \partial_{\theta_j} \hat\psi_\theta } \label{eq:overlap_matrix} \\
    H(\theta)_{ij} 
    &= \mel{\partial_{\theta_i}\hat\psi_\theta }{\hat H}{\partial_{\theta_j} \hat\psi_\theta}\label{eq:H_matrix}
\end{align}
which can be estimated efficiently with Monte Carlo sampling, as we demonstrate in the next section.
The $S$-matrix in Eq.~\eqref{eq:overlap_matrix} is commonly referred to as the \emph{Gram matrix}, \emph{overlap matrix} or \emph{quantum geometric tensor}~\cite{stokes2020quantum}. The $H$-matrix represents the discretization of the Hamiltonian $\hat{H}$ and appears also in e.g.\ the linear method~\cite{toulouse2007optimization}.
For PII, we obtain the linear system $Q(\theta)\xi=\frac12\nabla_\theta E(\hat\psi_\theta)$, clarifying the preconditioner and parameter-update rule of PII in Eq.~\eqref{eq:preconditioned_gradient_update_PII}. 

We can also derive SR from a similar Galerkin-projection viewpoint. For SR the direction is the Riemannian gradient $\ket{d_k} = \operatorname{grad}_\mathbb SE(\hat \psi_k) = 2(\hat H - E(\hat\psi_k))\ket{\hat \psi_k}$, and we use the $L^2$ inner product instead of $b$: $b(v,w) = \braket{v}{w}$. This yields the familiar formulas
\begin{align*}
    \braket{\partial_{\theta_i}\hat\psi_{\theta}}{\partial_{\theta_j}\hat\psi_{\theta}}
    &=
    S(\theta)_{ij}
    \\
    \braket{\partial_{\theta_i}\hat\psi_{\theta}}{d_k}
    &=
    \partial_{\theta_i}E(\psi_\theta),
\end{align*}
Any solution $\xi\in\mathbb R^P$ to the linear system $S(\theta)\xi=\nabla_\theta E(\psi_\theta)$ now represents the orthogonal projection of the Riemannian gradient into $\mathcal V_\theta$ with respect to the $L^2$ inner product.
This shows that SR is in fact an $L^2$ projected Riemannian gradient descent scheme. Note that the standard $L^2$ inner product lacks information about the Hamiltonian spectrum. This lies at the core of the struggling behavior of SR in certain spectral regimes.

Finally, our Galerkin viewpoint also gives insight into the properties of the Rayleigh-Gauss-Newton method \cite{webber2022rayleigh} as a Riemannian method. The latter can now be understood as follows: $2b$ is the Riemannian Hessian of the energy, where $b(v, w) = \mel{v}{\hat{H} - E(\psi)\hat{I})}{w}$. When discretized, it becomes the matrix $H(\theta) - E(\theta)S(\theta)$. 

In our formalism, we can categorize the existing optimization schemes in terms of the corresponding function space optimizer and eigensolver, as shown in Table~\ref{tab:categorization}. Hence, to summarize the procedure to produce a parameter update rule from a given eigensolver (and therefore reproduce Table~\ref{tab:categorization}) goes as follows: (i) a chosen eigenvalue solver yields a functional update direction $\ket{d_k}$, which (ii) induces a choice for the inner product $b$, in order to make $b(v, d_k)$ tractable. Through (iii) Galerkin projections, we obtain a linear equation to define a parameter update rule for variational models.

\begin{table}
    \label{tab:algorithms}
    \centering
    \renewcommand{\arraystretch}{1.6}  
    \begin{tabularx}{\textwidth}{|C|C|C|C|}
    \hline
    {\textbf{Parameter algorithm}} & \textbf{Function space optimizer} & {\textbf{Inner product} $b(v, w)$} & {\textbf{Eigensolver}} \\
    \hline\hline
    Projected inverse iteration (this work) & Globalized Riemannian Newton method on $\mathbb{S}$ & $\mel{v}{\hat{H} - \tau \hat{I}}{w}$ & Shifted inverse iteration \\\hline
    Rayleigh Gauss-Newton \cite{webber2022rayleigh} & Riemannian Newton method on $\mathbb{S}$ & $\mel{v}{\hat{H} - E(\psi)}{w}$ & Rayleigh quotient iteration \\\hline
    Stochastic reconfiguration & Riemannian $L^2$ gradient descent on $\mathbb{S}$ & $\braket{v}{w}$ & Power iteration \\\hline
    Wasserstein quantum Monte Carlo \cite{neklyudov2023wasserstein}  & Wasserstein gradient descent & {\raisebox{-\height}{--}} & {\raisebox{-\height}{--}} \\\hline
    \end{tabularx}
    \caption{Correspondence of optimization algorithms across parameter space, function space, and ---where applicable---eigensolvers. We expand on the interpretation of SR as power iteration in Appendix~\ref{sec:sr_is_power_iter}\label{tab:categorization}. }
\end{table}

\subsection{Monte Carlo estimators}
Our PII approach is efficient due to the availability of efficient Monte Carlo estimators of the $S$ and $H$ matrix in Eqs.~\eqref{eq:overlap_matrix} and \eqref{eq:H_matrix}. We focus on unnormalized complex-valued wave functions $\psi_\theta \in \mathbb{C}$ with real parameters $\theta \in \mathbb{R}^P$, but the method generalizes to other cases. We generate a set of $M$ samples $\{x^{(m)}\}_{m=1}^{M}$, distributed according to the Born probability $p_\theta (x) = \abs{\hat{\psi}_\theta^2}$. The energy is obtained in the standard way as $E = \mcest{p}{E_L}$ with the local energy $E_L(x) = [H \psi_\theta](x) / \psi_\theta(x)$.
The estimators for the energy gradient and the matrices $S$ and $H$ read

\begin{align}
    \partial_{\theta_i} E &= 2\Re\mcest{p}{(J_i-\mcest{p}{J_i})^* (E_L - E)} \\
    &\approx (\mathbf{O}^\top \cdot \mathbf{e})_i \label{eq:MC_Egrad}\\
    S(\theta)_{ij} &= \Re\mcest{p}{ (J_i - \mcest{p}{J_i})^*  (J_j - \mcest{p}{J_j}) } \\ 
    &\approx (\mathbf{O}^\top \cdot \mathbf{O})_{ij} \label{eq:MC_S} \\
    H(\theta)_{ij} &= \Re\mcest{p}{\left(J_i - \mcest{p}{ J_i}\right)^* \left(E_{L,j} + E_L \left( J_j - \mcest{p}{ J_j}\right)\right)} \\
    &\approx (\mathbf{O}^\top \cdot \mathbf{A})_{ij} \label{eq:MC_H}
\end{align}
We introduced the log derivative $J_i(x) = \partial_{\theta_i} \log \psi_\theta(x)$, and $E_{L,i}(x) = \partial_{\theta_i} E_L(x)$ denotes the derivative of the local energy. The matrices in the final expressions read
\begin{align}
    \mathbf{e}_{n} &= \frac{2}{\sqrt{M}} \ReIm_n\left(E_L(x^{(m)}) - E\right) &\mathbf{e} \in \mathbb{R}^{2M} \label{eq:MC_e_mat} \\
    \mathbf{O}_{nj} &= \frac{1}{\sqrt{M}} \ReIm_n\left(J_j(x^{(m)}) - \mcest{p}{J_j}\right) &\mathbf{O}\in \mathbb{R}^{2M \times P} \label{eq:MC_O_mat} \\
    \mathbf{A}_{nj} &=  \frac{1}{\sqrt{M}} \ReIm_n \left(E_{L,\theta}(x^{(m)}) + E_L(x^{(m)}) (J_j(x^{(m)}) - \mcest{p}{J_j})\right) &\mathbf{A}\in \mathbb{R}^{2M \times P} \label{eq:MC_A_mat}
\end{align}
where $m = [(n-1)~\text{mod}~M] + 1$, i.e., we count $m=1,\dots,M,1,\dots,M$, and $\ReIm_n$ indicates that the real part must be considered for $n = 1, \ldots, M$ and the imaginary part for $n = M+1, \ldots, 2M$. 
While computing the Jacobian $J$ and the local energies $E_L$ are standard practices in SR as well, the local energy derivative $E_{L,i}$ is new in PII. However, for spin systems and neural network wave functions, the gradients in $E_{L,i}$ of the wave function can be computed with a computational cost scaling similarly to $E_L$. Indeed, backpropagation for many neural network architectures has the same scaling as a forward pass, typically roughly adding a constant factor $3$ in FLOPs~\cite{jax2018github}, independent of the system size.

Despite the symmetry of $H(\theta)$ in Eq.~\eqref{eq:H_matrix}, the Monte Carlo estimator in Eq.~\eqref{eq:MC_H} is only asymptotically Hermitian. In practice, we use a Tikhonov regularization for the preconditioner
\begin{align}
    Q(\theta) = H(\theta) - \tau S(\theta) + \lambda \mathbb{I}_P. \label{eq:regularization}
\end{align}
Notice that $\lambda$ has the unit of energy here, while our learning rate $\eta$ is dimensionless. Hence, the magnitudes of $\lambda$ and $\eta$ for PII and SR have no clear relation. We remark that the choice of regularization strength is in general a delicate matter \cite{feischl2024regularized, lubich2025regularized}.

We can further use the push-through identity for the inversion of the empirical preconditioner
\begin{align}
    \left(\mathbf{O}^\top  \mathbf{A} - \tau \mathbf{O}^\top \mathbf{O} + \lambda \mathbb{I}_P\right)^{-1} \mathbf{O}^\top  \mathbf{e}
    = \mathbf{O}^\top \left(\mathbf{O}  \mathbf{A}^\top - \tau \mathbf{O} \mathbf{O}^\top + \lambda \mathbb{I}_{2M}\right)^{-1}   \mathbf{e} \label{eq:minPII}
\end{align}
In the case $2M \leq P$ we now need to invert the smaller matrix
\begin{align}
    \mathbf{O}  \mathbf{A}^\top - \tau \mathbf{O} \mathbf{O}^\top + \lambda \mathbb{I}_{2M} \in \mathbb{R}^{2M \times 2M} \label{eq:Qbar}
\end{align}
We will refer to this approach as MinPII, in accordance with MinSR that uses the same push-through identity~\cite{chen2024empowering}. In the numerical experiments, we use minPII whenever $2M < P$ and PII otherwise. This yields a computational cost of $\mathcal O(\min(M^2P, P^2M))$ due to the matrix-matrix multiplications involved in forming the preconditioner. We use scipy's general linear solve to obtain the solution to the linear problem in all cases.

\subsection{Kaczmarz-inspired acceleration}
\label{sec:kaczmarz_acceleration}
Here, we discuss how to adapt the ideas of the SPRING algorithm \cite{goldshlager2024kaczmarz} (or similarly MARCH~\cite{gu2025solving}) to the setting of PII. SPRING improves the sample efficiency of SR and thereby frequently improves its empirical performance. We observe a similar behavior for PII as illustrated in Supplemental Material~\ref{sec:spring_pii}, motivating the theoretical derivation in this section.

The derivation of SPRING requires a least-squares formulation of the update direction. Because of the non-symmetric Monte Carlo estimator of the $H$-matrix in Eq.~\eqref{eq:MC_H}, a least-squares formulation of PII is not clear a priori. For flexibility, we specialize the viewpoint of Ref.~\cite{mullerposition} to the VMC setting. The linear system in the Galerkin projection in Eq.~\eqref{eq:general_linear_system} can equivalently be solved as a minimization problem
\begin{align}
    \xi^* &= \underset{\xi \in \mathbb{R}^P}{\operatorname{argmin}} \, \mathcal{L}(\xi) , \quad\text{where }    \mathcal{L}(\xi) =  \frac{1}{2}\norm{\sum_{j=1}^P \ket{\partial_j \hat{\psi}_k} \xi_j - \ket{d_k}}_b^2 .
\end{align}
An important consequence of our framework is that we defined the norm to be with respect to a general inner product $b$. 
By expanding the objective above and using the definitions of the preconditioner $Q(\theta_k)$ and the parameter gradient $\nabla_\theta E(\psi_k)$, the minimization objective reduces to the quadratic form:
\begin{align}
    \mathcal{L}(\xi) 
    &=
    \frac{1}{2} \xi^\top Q(\theta_k) \xi - \frac12 \xi^\top \nabla_\theta E(\psi_k) + \text{const}.
\end{align}
From this perspective, adding an $L_2$ penalty $\lambda \norm{\xi}_2^2$ to $\mathcal{L}$ to keep the norm of $\xi$ small recovers the standard Tikhonov-regularized linear system in Eq.~\ref{eq:regularization}. 
This equivalence is straightforward to verify for SR, where $b$ simply reduces to the standard $L_2$ norm and $Q$ reduces to the overlap matrix $S$~\cite{stokes2020quantum}. 

To achieve the sample efficiency of SPRING through Kaczmarz within PII, we replace the origin-centered penalty above with a term anchored to the update direction from the previous VMC step, $\xi_{k-1}$
\begin{align}
    \xi^* 
    &=
    \underset{\xi \in \mathbb{R}^P}{\operatorname{argmin}} \, \left[\mathcal{L}(\xi) + \frac{\lambda}{2} \norm{\xi - \mu \xi_{k-1}}_2^2\right] \\
    &= \underset{\xi \in \mathbb{R}^P}{\operatorname{argmin}} \, \left[ \frac{1}{2} \xi^T Q(\theta_k) \xi - \frac12\xi^T \nabla_\theta E(\psi_k) +  \mu \xi^T Q(\theta_k) \xi_{k-1} + \frac{\lambda}{2} \xi^T \xi \right] + \mu \xi_{k-1} 
\end{align}
Converting the above back into a linear problem yields PII-SPRING
\begin{align}
\left(Q(\theta_k) + \lambda \mathbb{I}_P\right) (\xi - \mu \xi_{k-1}) &= \frac12\nabla_\theta E(\psi_k) - \mu Q(\theta_k) \xi_{k-1}
\end{align}
In terms of Monte Carlo estimators, we obtain the solution
\begin{align}
    \xi^* = \mu\xi_{k-1} + \left(\mathbf{O}^T \mathbf{A} - \tau \mathbf{O}^T \mathbf{O} + \lambda \mathbb{I}_{2M}\right)^{-1} \mathbf{O}^T \left[\tfrac12\mathbf{e} - \mu \left( \mathbf{A} - \tau \mathbf{O}\right) \xi_{k-1} \right]
\end{align}
which also has a straightforward MinPII-SPRING form using the push-through identity as in Eq.~\eqref{eq:minPII}.

\subsection{Convergence results}\label{sec:proofs}
We present the precise convergence results for PII and SR in the non-parametric setting, where these algorithms correspond to inverse iteration and Riemannian gradient descent with retraction by normalization, respectively. Further details on inverse iteration and eigenvalue methods can be found in Ref.~\cite{golub2013matrix}, and on Riemannian optimization in Refs.~\cite{absil2008optimization, boumal2023introduction}. In the following, we consider a Hermitian Hamiltonian $\hat{H} \in \mathbb{C}^{2^{n}\times 2^{n}}$ acting on the Hilbert space $\mathbb{H} = \mathbb{C}^{2^n}$. We denote its distinct eigenvalues, which may correspond to degenerate eigenspaces, by $E_0 < E_1 < \dots < E_{\textup{max}}$. Letting $V_0$ represent the ground state eigenspace associated with $E_0$, we define the system's spectral gap as $\Delta = E_1 - E_0 > 0$, and the system's spectral spread as $\Gamma = E_{\text{max}} - E_0$. By $P_0$ and $P_0^\perp$ we denote the orthogonal projection onto $V_0$ and its orthogonal complement, respectively.

\begin{restatable}[Convergence of Inverse Iteration]{theorem}{theoreminverseiter}
    \label{theorem:convergence_shifted_inverse_iteration}
    Let the sequence $(\psi_k)_{k\in\mathbb N}$ be generated by inverse iteration with shift $\tau < \frac12(E_0 + E_1)$, i.e.,
    \[
        \psi_{k+1}
        =
        \frac{(\hat H - \tau I)^{-1}\psi_k}{\| (\hat H - \tau I)^{-1}\psi_k \|}.
    \]
    Assume that the initial iterate $\psi_0 \in \mathbb S$ has nonzero overlap with $V_0$, meaning $P_0\psi_0$ is non-zero and set $\psi^*\coloneqq P_0\psi_0/\lVert P_0\psi_0 \rVert\in V_0$. Then, it holds
    \begin{equation}\label{eq:rate_inverse_iter_rigorous}
        \|\psi_k \pm \psi^* \| 
        \leq 
        \left( \sqrt 2 \frac{\|P_0^\perp\psi_0\|}{\|P_0\psi_0\|}\right)
            \left|
                1 - \frac{E_1 - E_0}{E_1 - \tau}
            \right|^k. 
    \end{equation}
    The sign in front of $\psi^*$ in the equation above is given by $\operatorname{sign}(E_0 - \tau)$.
\end{restatable}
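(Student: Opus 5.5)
The plan is to work with the unnormalized iterates, since normalization only rescales them. By induction, $\psi_k = c_k (\hat H - \tau I)^{-k}\psi_0$ for some scalar $c_k>0$. I tacitly assume $\tau \neq E_0$ and $\tau$ is not an eigenvalue, so that the iteration is defined; the case $\tau = E_0$ is degenerate, with convergence in one step.

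\textbf{Step 1: spectral decomposition.} Split $\psi_0 = P_0\psi_0 + P_0^\perp\psi_0$. Both $V_0$ and $V_0^\perp$ are invariant under $(\hat H-\tau I)^{-1}$.
\begin{itemize}
\item On $V_0$ the operator $(\hat H-\tau I)^{-1}$ acts as multiplication by $(E_0-\tau)^{-1}$, so $(\hat H-\tau I)^{-k}P_0\psi_0 = (E_0-\tau)^{-k}P_0\psi_0$.
\item On $V_0^\perp$ its eigenvalues are $(E_j-\tau)^{-1}$ with $E_j\ge E_1$. Because $\tau<\tfrac12(E_0+E_1)<E_1$, we have $E_j-\tau\ge E_1-\tau>0$, so the operator norm there is at most $(E_1-\tau)^{-1}$.
\end{itemize}
Hence, with $\rho \coloneqq |E_0-\tau|/(E_1-\tau)$, the ratio of components satisfies
\[
t_k \coloneqq \frac{\|P_0^\perp\psi_k\|}{\|P_0\psi_k\|} \le \rho^k\,\frac{\|P_0^\perp\psi_0\|}{\|P_0\psi_0\|}.
\]
The condition $\tau<\tfrac12(E_0+E_1)$ is exactly what gives $|E_0-\tau|<E_1-\tau$, so $\rho<1$. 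The identity $\rho = \bigl|1-\tfrac{E_1-E_0}{E_1-\tau}\bigr|$ is just algebra.

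\textbf{Step 2: direction of the $V_0$ component.} The $V_0$-component of $\psi_k$ is $c_k(E_0-\tau)^{-k}P_0\psi_0$, which is a positive multiple of $\sigma_k\psi^*$ with $\sigma_k=\operatorname{sign}(E_0-\tau)^k$. I will state the sign in this form. It clarifies the $\pm$ in the theorem: it is $+$ when $\tau<E_0$ and alternates when $E_0<\tau$.

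\textbf{Step 3: from the component ratio to a distance.} Since $\psi_k$ is a unit vector, write $\psi_k = \cos\theta_k\,\sigma_k\psi^* + \sin\theta_k\, w_k$ with $w_k\in V_0^\perp$ a unit vector, $\theta_k\in[0,\pi/2)$ and $\tan\theta_k = t_k$. Then
\[
\|\psi_k-\sigma_k\psi^*\| = \sqrt{2-2\cos\theta_k} = 2\sin(\theta_k/2).
\]
Because $\theta_k\le\pi/2$ gives $\cos(\theta_k/2)\ge 1/\sqrt2$, this is at most $\sqrt2\sin\theta_k$. Finally $\sin\theta_k\le\tan\theta_k = t_k$. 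Combining with Step 1 yields the claimed bound with constant $\sqrt2\,\|P_0^\perp\psi_0\|/\|P_0\psi_0\|$.

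The only delicate points are the sign bookkeeping in Step 2 and verifying in Step 1 that the shift condition makes every excited level contract relative to the ground level. The rest is elementary trigonometry.
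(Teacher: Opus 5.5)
Your proposal is correct and is essentially the paper's argument: the same splitting $\psi_0=P_0\psi_0+P_0^\perp\psi_0$, the same comparison of the exact factor $(E_0-\tau)^{-k}$ on $V_0$ with the bound $(E_1-\tau)^{-k}$ on $V_0^\perp$, and the same Pythagorean split of $\|\psi_k-\operatorname{sign}(E_0-\tau)^k\psi^*\|^2$ into a $V_0$ part and a $V_0^\perp$ part, which gives the same constant $\sqrt2$. The only difference is cosmetic: you close with $2\sin(\theta_k/2)\le\sqrt2\sin\theta_k\le\sqrt2\tan\theta_k$, whereas the paper bounds the two squared parts separately via $1-1/\sqrt{1+x^2}\le x$; your sign $\operatorname{sign}(E_0-\tau)^k$ is exactly the $\operatorname{sign}(\mu_0)^k$ that the paper's proof establishes.
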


\vspace{-2em}

\begin{restatable}[Convergence of Riemannian Gradient Descent]{theorem}{theoremriemanniangradient}
    \label{thm:convergence_RG}
    Let the sequence $(\psi_k)_{k\in\mathbb N}$ be generated by Riemannian gradient descent with retraction by normalization, i.e.,
    \begin{align*}
        \psi_{k+1} 
        =
        \frac{\psi_k - \eta\operatorname{grad}_\mathbb SE(\psi_k)}{\| \psi_k - \eta\operatorname{grad}_\mathbb SE(\psi_k) \|},
        \quad
        \operatorname{grad}_\mathbb S E(\psi_k) 
        =
        2[\hat H\psi_k - E(\psi_k)\psi_k].
    \end{align*}
    Assume that the initial iterate $\psi_0 \in \mathbb S$ has nonzero overlap with $V_0$, meaning $P_0\psi_0$ is non-zero and set $\psi^*\coloneqq P_0\psi_0/\lVert P_0\psi_0 \rVert\in V_0$. Then, choosing the stepsize $\eta = 1/(2(E_{\textup{max}} - E_0))$ it holds
    \begin{equation*}
        \| \psi_k - \psi^* \|
        \leq
        \left( \sqrt2 \frac{\|P^\perp_0\psi_0\|}{\| P_0\psi_0 \|} \right)
        \left | 1 - \frac{1}{2}\frac{E_1 - E_0}{E_{\textup{max}} - E_0} \right|^k
    \end{equation*}
    Moreover, the algorithm converges for every step-size $\eta < 1/(E_{\textup{max}} - E_0)$.
\end{restatable}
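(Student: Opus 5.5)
The plan is to exploit the fact that each step of Riemannian gradient descent with retraction by normalization applies a spectral filter to $\psi_k$. Write $E_k \coloneqq E(\psi_k)$. Before normalization the update reads
\[
\psi_k - \eta\operatorname{grad}_\mathbb S E(\psi_k) = \bigl[(1+2\eta E_k) I - 2\eta \hat H\bigr]\psi_k .
\]
This is a polynomial in $\hat H$, so it is diagonal in an eigenbasis of $\hat H$. Hence $\psi_k$ is a positive multiple of $p_k(\hat H)\psi_0$, where
\[
p_k(x)=\prod_{j=0}^{k-1}\bigl(1+2\eta(E_j - x)\bigr).
\]
The iterate-dependent scalars $E_j$ are the only nonlinearity. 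The key observation is that we never need to know them exactly: it suffices that $E_j\in[E_0,E_{\textup{max}}]$, because each $\psi_j$ is a unit vector and $E_j$ is a Rayleigh quotient.

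Decompose $\psi_0$ into its components in the eigenspaces of $\hat H$. Each component with eigenvalue $\lambda$ is multiplied at step $j$ by $1+2\eta(E_j-\lambda)$, and the $V_0$ component by $1+2\eta(E_j-E_0)\ge 1>0$. So the $V_0$ component of $\psi_k$ stays a positive multiple of $P_0\psi_0$; in particular it is nonzero, the retraction is well defined, and $P_0\psi_k = c_k\psi^*$ with $c_k>0$. We then track the tangent-type ratio $t_k\coloneqq\|P_0^\perp\psi_k\|/\|P_0\psi_k\|$. Per eigenvalue $\lambda\ge E_1$, the relative amplification at step $j$ is
\[
r_j(\lambda)=\frac{1+2\eta(E_j-\lambda)}{1+2\eta(E_j-E_0)} = 1-\frac{a}{1+b},\qquad a=2\eta(\lambda-E_0)\in[2\eta\Delta,2\eta\Gamma],\quad b=2\eta(E_j-E_0)\in[0,2\eta\Gamma].
\]
With $\eta=1/(2\Gamma)$ we have $a\in[\Delta/\Gamma,1]$ and $b\in[0,1]$, so $r_j(\lambda)\in[0,\,1-\tfrac12\Delta/\Gamma]$. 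Summing over the orthogonal components of $P_0^\perp\psi_j$ then gives $t_{j+1}\le(1-\tfrac{\Delta}{2\Gamma})\,t_j$, and hence $t_k\le\rho_{\text{SR}}^k\, t_0$.

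To convert this into a distance bound, write $c_k=\cos\theta_k$ with $\theta_k\in[0,\pi/2)$, so that $t_k=\tan\theta_k$. Then
\[
\|\psi_k-\psi^*\|^2 = 2-2c_k = \frac{2\sin^2\theta_k}{1+c_k}\le 2\sin^2\theta_k\le 2\tan^2\theta_k ,
\]
which gives $\|\psi_k-\psi^*\|\le\sqrt2\,t_k$, exactly the stated bound with constant $\sqrt2\,\|P_0^\perp\psi_0\|/\|P_0\psi_0\|$.

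For general $\eta<1/\Gamma$ the same formula applies, with $a\in[2\eta\Delta,2\eta\Gamma]$ and $b\in[0,2\eta\Gamma]$. This yields the two bounds
\[
r_j(\lambda)\ge 1-2\eta\Gamma>-1,\qquad r_j(\lambda)\le 1-\frac{2\eta\Delta}{1+2\eta\Gamma}<1 .
\]
Therefore $|r_j(\lambda)|\le\max\bigl(|1-2\eta\Gamma|,\,1-\tfrac{2\eta\Delta}{1+2\eta\Gamma}\bigr)<1$, a uniform contraction that gives convergence.

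I expect the main obstacle to be handling the iterate-dependent shift $E_j$ uniformly. The step I must get right is that both the worst-case numerator (smallest $E_j$, largest $\lambda$) and the worst-case denominator (largest $E_j$) are controlled simultaneously using only $E_j\in[E_0,E_{\textup{max}}]$, rather than requiring a monotone energy decrease.
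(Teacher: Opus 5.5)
Your proof is correct. Its skeleton is the paper's: the $V_0$-component is rescaled by the positive factor $1+2\eta(E_j-E_0)$, so $P_0\psi_k$ stays a positive multiple of $\psi^*$. Each excited component is contracted relative to it by your $r_j(\lambda)$, and the distance is recovered from $\|\psi_k-\psi^*\|^2=2(1-\|P_0\psi_k\|)\le 2t_k^2$.

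The real difference is how you control the iterate-dependent denominator. The paper first proves a Descent Lemma to get $E(\psi_k)\le E(\psi_0)$; its text writes this inequality reversed. From this it derives the factor $1-\Delta/(\Gamma+E(\psi_0)-E_0)$, and then discards the extra information via $E(\psi_0)-E_0\le\Gamma$. You instead observe directly that every Rayleigh quotient of a unit vector lies in $[E_0,E_{\max}]$, so the descent lemma is not needed for the stated rate at all.

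You also check $r_j(\lambda)\ge 0$ at $\eta=1/(2\Gamma)$, which the paper uses tacitly when it drops the absolute value. Your two-sided bound $1-2\eta\Gamma\le r_j(\lambda)\le 1-2\eta\Delta/(1+2\eta\Gamma)$ gives an explicit uniform contraction for every $0<\eta<1/\Gamma$. The paper, by contrast, backs the ``Moreover'' clause only with energy monotonicity, which by itself does not imply convergence of the iterates.

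The paper's route buys monotone energy decrease and a slightly sharper rate depending on $E(\psi_0)$. Yours is shorter and proves the whole statement, including the general step-size claim.
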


The proofs for both results are provided in Appendix~\ref{sec:proofs}. As discussed in the Section~\ref{sec:methods}, these theorems establish the global linear convergence of both algorithms. They illustrate how the convergence rate of Riemannian gradient descent deteriorates as the spectral gap closes and the spread widens, while inverse iteration can achieve a gap-independent and arbitrarily fast rate as $\tau\to E_0$. Notably, the explicit constants in these bounds are identical for Riemannian gradient descent and inverse iteration. While the general convergence properties of these underlying methods are well-known in principle, we are not aware of existing literature that provides a complete, global non-asymptotic analysis with fully explicit constants that also accommodates degenerate ground states. For this reason, we state the rigorous results in full here and provide the complete proofs in Appendix~\ref{sec:proofs}. We emphasize that our core theoretical contribution is the realization that the analysis above directly describes SR and PII, as these are the Galerkin-projected variants of Riemannian gradient descent and inverse iteration.

\section{Conclusions and Outlook}
We introduced Projected Inverse Iteration (PII) to optimize neural network wavefunctions by framing the ground-state search as an eigenvalue problem rather than an energy minimization task. Our theoretical framework identifies Stochastic Reconfiguration (SR) as a Galerkin-projected Riemannian gradient descent, which explains its sensitivity to closing spectral gaps. In contrast, PII projects classical inverse iteration onto the tangent space of the variational manifold. Numerical experiments on highly frustrated two-dimensional spin systems confirm that PII is robust to closing spectral gaps, yielding significantly faster convergence than SR while maintaining the same favorable computational scaling.

Future research will focus on extending PII to target excited states. Additionally, integrating PII with variance reduction techniques and advanced sampling methods will be necessary to adapt the algorithm for complex fermionic systems in quantum chemistry and condensed matter physics.

Beyond quantum mechanics, PII demonstrates the utility of an infinite-dimensional perspective on optimization in scientific machine learning, as suggested in \cite{mullerposition}. Because PII can be interpreted as a natural gradient method tailored specifically for eigenvalue problems, it maps directly to general non-linearly parameterized eigenvalue problems. We anticipate this makes PII a robust, scalable tool for a wider class of challenges in computational science.

\backmatter

\bmhead{Acknowledgements}
The simulations in this work were based on
Netket~\cite{vicentini2022netket, carleo2019netket} and Jax~\cite{jax2018github}.

\section*{Declarations}
MZ acknowledges support from an ETH Postdoctoral Fellowship for the project ``Reliable, Efficient, and Scalable Methods for Scientific Machine Learning''. This work was supported by a grant from the Swiss National Supercomputing Centre (CSCS) under project ID lp20 on Alps.

\section*{Code availability} An implementation of the algorithms presented in this work are available on the github: \url{https://github.com/jwnys/projected_inverse_iteration}.

\begin{appendices}

\section{Variational model: ViT recap}
The variational state used in this work is based on the previously introduced ViT architecture for lattice spin systems~\cite{rende2024simple}. A spin configuration on an $L\times L$ square lattice is divided into $b\times b$ blocks, yielding $N=L^2/b^2$ patches. Each patch is mapped linearly to a $d$-dimensional embedding, defining the input sequence $X^{(0)}=[\boldsymbol{x}_1,\dots,\boldsymbol{x}_N]$, with $\boldsymbol{x}_i\in\mathbb{R}^d$. The sequence is then processed by $n_\ell$ stacked encoder blocks. Each block contains a multi-head factored attention (MHFA) layer and a point-wise feed-forward network of hidden size $2d$ with ReLU activation, with residual connections and pre-layer normalization throughout. Given an input sequence $X^{(\ell)}$, the MHFA layer produces an intermediate sequence $A^{(\ell)}=[\boldsymbol{A}_1,\dots,\boldsymbol{A}_N]$, where $\boldsymbol{A}_i\in\mathbb{R}^d$, according to
\begin{align}
A_{i,p}
=
\sum_{q=1}^{d}W_{p,q}\sum_{j=1}^{N}\alpha_{ij}^{\mu(q)}\sum_{r=1}^{d}V_{q,r}x^{(\ell)}_{j,r}.
\end{align}

Here, $x^{(\ell)}_{j,r}$ denotes the $r$-th component of the $d$-dimensional representation of patch $j$ at layer $\ell$; $\mu(q)=\lceil q h / d\rceil$ maps the feature index $q$ to its corresponding attention head; $V \in \mathbb{R}^{d \times d}$ and $W \in \mathbb{R}^{d \times d}$ are learnable weight matrices; and $\alpha^\mu \in \mathbb{R}^{N \times N}$ is the attention matrix.
Unlike standard dot-product attention, where the coefficients are generated from query--key overlaps through a softmax, MHFA\cite{rende2024simple} treats the attention weights themselves as variational parameters. On the periodic patch lattice, these weights are taken in relative form $\alpha_{ij}^{\mu}=\alpha_{i-j}^{\mu},$
with the index difference understood modulo the patch lattice. This parameterization is compatible with lattice translations and reduces the number of attention parameters from $O(N^2)$ to $O(N)$.
The encoder block then updates the sequence $X^{(\ell)}$ according to
\begin{align}
\tilde{X}^{(\ell)} &= X^{(\ell)}+\mathrm{MHFA}\!\left(\mathrm{LayerNorm}(X^{(\ell)})\right), \\
X^{(\ell+1)} &= \tilde{X}^{(\ell)}+\mathrm{MLP}\!\left(\mathrm{LayerNorm}(\tilde{X}^{(\ell)})\right).
\end{align}
After $n_\ell$ blocks, the final sequence $X^{(n_\ell)}=(\boldsymbol{y}_1,\dots,\boldsymbol{y}_N)$ is summed over patches $N$ to give a single $d$-dimensional representation, $\boldsymbol{z}=\sum_{i=1}^{N}\boldsymbol{y}_i \in \mathbb{R}^d$. 
The variational wavefunction is then parameterized as $\log \Psi_\theta(\sigma)=f_\theta(\sigma)+ i\, g_\theta(\sigma),$, with 
\begin{align}
f_\theta(\sigma)&=\sum_{\alpha=1}^{d}\phi\!\left(b_\alpha^{(f)}+\boldsymbol{w}_\alpha^{(f)}\!\cdot\!\boldsymbol{z}\right),\\
g_\theta(\sigma)&=\sum_{\alpha=1}^{d}\phi\!\left(b_\alpha^{(g)}+\boldsymbol{w}_\alpha^{(g)}\!\cdot\!\boldsymbol{z}\right),
\end{align}
where $\phi(x)=\log\cosh x$. All parameters $b$ and $w$ are real valued. For stoquastic Hamiltonians, a positive real-valued ansatz is used, obtained by omitting the imaginary part $g_\theta(\sigma)$. For frustrated systems the complex-valued form is used.

\section{Practical implementation details}

In this section, we present the pseudocode for the parameter-space implementation of PII used in the Results section (Algorithm~\ref{alg:pii_paramspace}), based on the Monte Carlo estimators introduced in Eqs.~\eqref{eq:MC_Egrad}--\eqref{eq:MC_H}. At iteration $k$, samples are drawn from the current variational state to estimate the projected quantities, after which the regularized linear system in Eq.~\eqref{eq:regularization} is solved to obtain the update direction in parameter space. 
\begin{algorithm}[tbh]
\centering
\begin{small}
\begin{algorithmic}[1]
  \Require regularization $\lambda$, learning-rate schedule $\{\eta_k\}_{k=0}^{K-1}$, number of iterations $K$, sample count $M$, initial parameters $\theta_0$, energy shift $\tau$
  \For{$k = 0, \dots, K-1$}
    \State Draw $M$ MCMC samples
    \State Estimate $\mathbf{e}_k$, $\mathbf{O}_k$, and $\mathbf{A}_k$ \qquad (see Eqs.~\eqref{eq:MC_e_mat}--\eqref{eq:MC_A_mat})
    \State Form $\nabla_\theta E(\psi_k) \gets \mathbf{O}_k^\top \mathbf{e}_k$
    \State Form $Q_k \gets \mathbf{O}_k^\top \mathbf{A}_k - \tau \mathbf{O}_k^\top \mathbf{O}_k + \lambda \mathbb{I}$ \qquad (see Eqs.~\eqref{eq:MC_S}--\eqref{eq:MC_H} \& Eq.~\eqref{eq:regularization} for PII, or Eq.~\eqref{eq:Qbar} for MinPII)
    \State Solve the linear system $\xi_k = Q_k^{-1} [\tfrac12\nabla_\theta E(\psi_k)]$
    \State Update parameters: $\theta_{k+1} \gets \theta_k - \eta_k \xi_k$ \qquad (see Eq.~\eqref{eq:preconditioned_gradient_update_PII})
  \EndFor
  \State \Return $\theta_K$
\end{algorithmic}
\end{small}
\caption{Practical implementation of PII}
\label{alg:pii_paramspace}
\end{algorithm}

\section{Numerical details: hyperparameters for the experiments}

Table~\ref{tab:training_hparams} shows the architecture and optimization hyperparameters used in the numerical experiments reported in the Results section. For the ViT ansatz, the architectural parameters are the number of transformer layers $n_\ell$, the number of attention heads $h$, the embedding dimension $d$, and the patch size $b$. For the RBM ansatz, we report the hidden-unit density $\alpha$. The optimization parameters are the number of Monte Carlo samples $M$, the learning rate $\eta$, the regularization parameter $\lambda$, and, where applicable, the energy shift $\tau$. 

\begin{table*}[htb]
\centering
\small
\setlength{\tabcolsep}{4pt}
\caption{Architecture and optimization hyperparameters used in the numerical experiments.}
\label{tab:training_hparams}
\resizebox{\textwidth}{!}{%
\begin{tabular}{@{}lllllccccc cccc@{}}
\toprule
\multirow{2}{*}{Figure} & \multirow{2}{*}{System} & \multirow{2}{*}{Size} & \multirow{2}{*}{Method} & \multirow{2}{*}{Ansatz} & \multicolumn{5}{c}{Architecture} & \multicolumn{4}{c}{Optimization} \\
\cmidrule(lr){6-10}\cmidrule(lr){11-14}
& & & & & $n_\ell$ & $h$ & $d$ & $b$ & $\alpha$ & $M$ & $\eta$ & $\lambda$ & $\tau$ \\
\midrule

Fig.~\ref{fig:spectral_gaps} & TFIM & $4\times4$ & PII & real ViT & 4 & 2 & 20 & 2 & \textemdash & 32768 & 1.0 & $10^{-2}$ & $E_0$ \\
Fig.~\ref{fig:spectral_gaps} & TFIM & $4\times4$ & SR  & real ViT & 4 & 2 & 20 & 2 & \textemdash & 32768 & 0.01 & $10^{-4}$ & \textemdash \\
Fig.~\ref{fig:spectral_gaps} & TFIM & $4\times4$ & Adam  & real ViT & 4 & 2 & 20 & 2 & \textemdash & 32768 & 0.01 & $10^{-4}$ & \textemdash \\
Fig.~\ref{fig:undershoot_batch_sizes}a & TFIM & $4\times4$ & PII & RBM & \textemdash & \textemdash & \textemdash & \textemdash & 4 & 512, 4096, 16384 & 1.0 & $10^{-1}$ & $E_0$ \\
Fig.~\ref{fig:undershoot_batch_sizes}b & TFIM & $4\times4$ & PII & RBM & \textemdash & \textemdash & \textemdash & \textemdash & 4 & 512, 4096, 16384 & 1.0 & $10^{-1}$ & $1.2\,E_0$ \\
Fig.~\ref{fig:undershoot_batch_sizes}c & TFIM & $4\times4$ & PII & RBM & \textemdash & \textemdash & \textemdash & \textemdash & 4 & full sum & 1.0 & $10^{-1}$ & $\alpha E_0$ \\
Fig.~\ref{fig:10x10_systems}a & TFIM & $10\times10$ & PII & real ViT & 4 & 2 & 20 & 2 & \textemdash & 32768 & 1.0 & $10^{-3}$ & 1.1$E_0$ \\
Fig.~\ref{fig:10x10_systems}a & TFIM & $10\times10$ & SR  & real ViT & 4 & 2 & 20 & 2 & \textemdash & 32768 & 0.002 & $10^{-4}$ & \textemdash \\
Fig.~\ref{fig:10x10_systems}b & Heisenberg & $10\times10$ & PII & real ViT & 4 & 2 & 20 & 2 & \textemdash & 32768 & 1.0 & $10^{-2}$ & 1.4$E_0$ \\
Fig.~\ref{fig:10x10_systems}b & Heisenberg & $10\times10$ & SR  & real ViT & 4 & 2 & 20 & 2 & \textemdash & 32768 & 0.002 & $10^{-4}$ & \textemdash \\
Fig.~\ref{fig:10x10_systems}c & $J_1$--$J_2$ & $10\times10$ & MinPII & complex ViT & 4 & 10 & 60 & 2 & \textemdash & 16384 & 1.0 & $10^{-1}$ & $E_0$ \\
Fig.~\ref{fig:10x10_systems}c & $J_1$--$J_2$ & $10\times10$ & MinSR  & complex ViT & 4 & 10 & 60 & 2 & \textemdash & 16384 & 0.004 & $10^{-4}$ & \textemdash \\
Fig.~\ref{fig:high_lr_failure}a & TFIM & $10\times10$ & SR & real ViT & 4 & 2 & 20 & 2 & \textemdash & 32768 & $0.001 \rightarrow 0.01$ & $10^{-3}$ & \textemdash \\
Fig.~\ref{fig:high_lr_failure}b & TFIM & $10\times10$ & SR & real ViT  & 4 & 2 & 20 & 2 & \textemdash & 32768 & $0.001 \rightarrow 0.01$ & $10^{-4}$ & \textemdash \\
Fig.~\ref{fig:RQI_J1_H_matrix}a & TFIM & $4\times4$ & PII & RBM & \textemdash & \textemdash & \textemdash & \textemdash & 4 & full sum & 1.0 & $10^{-1}$ & $1.4E_0$ \\
Fig.~\ref{fig:RQI_J1_H_matrix}b & TFIM & $4\times4$ & PII & RBM & \textemdash & \textemdash & \textemdash & \textemdash & 4 & 4096 & 1.0 & $10^{-1}$ & $1.4E_0$ \\
Fig.~\ref{fig:pii_spring} & TFIM & $10\times10$ & PII & real ViT & 4 & 2 & 20 & 2 & \textemdash & 4096, 32768 & 1.0 & $10^{-3}$ & 1.1$E_0$ \\
Fig.~\ref{fig:pii_spring} & TFIM & $10\times10$ & PII-SPRING & real ViT & 4 & 2 & 20 & 2 & \textemdash & 4096, 32768 & 1.0 & $10^{-3}$ & 1.1$E_0$ \\

\bottomrule
\end{tabular}}
\end{table*}

\section{Details about finetuning learning rate of SR: demonstration of failure at higher LR}

\begin{figure}[t]
  \centering
  \includegraphics[width=\textwidth]{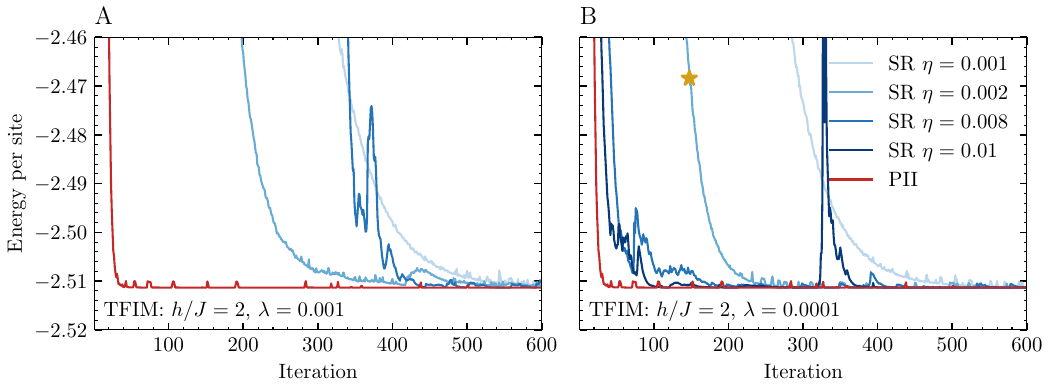}
\caption{\textbf{SR is sensitive to learning-rate tuning on the $10\times10$ TFIM.}
Energy per site versus iteration for TFIM at $h/J=2$, comparing SR across several learning rates with PII.
\textbf{A,} Results for $\lambda=10^{-3}$.
\textbf{B,} Results for $\lambda=10^{-4}$.
As the SR learning rate is increased, the optimization becomes progressively less stable, with transient spikes appearing in the estimated ground state energy. By contrast, PII converges to the ground state more rapidly than all SR settings shown here. The yellow star marks the SR hyperparameter setting used in the main text, corresponding to the best SR performance considered here.}
  \label{fig:high_lr_failure}
\end{figure}

In this section, we show that SR was tuned as carefully as possible to obtain its strongest performance. Even under such tuning, SR remains sensitive to the joint choice of learning rate and diagonal shift. Figure~\ref{fig:high_lr_failure}a,b shows the optimization trajectories obtained with two diagonal shifts, $\lambda=10^{-3}$ and $\lambda=10^{-4}$, respectively. The diagonal shift is introduced to improve numerical stability; however, if it is set too large, the SR convergence slows down. We therefore use $\lambda=10^{-4}$ as the best-performing SR setting. Keeping the diagonal shift fixed at $\lambda=10^{-4}$, we then increase the learning rate from $\eta=0.001$ to $\eta=0.01$. As shown in Fig.~\ref{fig:high_lr_failure}, larger learning rates progressively destabilize SR, leading to pronounced transient spikes and oscillations in the energy trajectory.

\section{Properties of the $H$ matrix}
\label{sec:H_matrix_spectrum}

To clarify the meaning of the matrices $H(\theta)$ and $S(\theta)$ defined in Eqs.~\eqref{eq:H_matrix} and \eqref{eq:overlap_matrix}, we interpret them as the effective Hamiltonian and the induced Hilbert-space metric on the variational tangent space $\mathcal V_\theta=\operatorname{span}\{\partial_{\theta_i}\hat\psi_\theta\}_{i=1}^P \subset T_{\hat\psi_\theta}\mathbb S .$
Consider an arbitrary tangent vector $\ket{\varphi}=\sum_{i=1}^P v_i\,\partial_{\theta_i}\ket{\hat\psi_\theta}\in \mathcal V_\theta ,$
with coefficient vector $v\in\mathbb R^P$. By construction, $H(\theta)$ and $S(\theta)$ reproduce the energy and norm of $\ket{\varphi}$ within this local linear space:
\begin{equation}
    \frac{v^\top H(\theta)v}{v^\top S(\theta)v}
    =
    \frac{\mel{\varphi}{\hat H}{\varphi}}{\braket{\varphi}{\varphi}}.
    \label{eq:projected_rayleigh}
\end{equation}
The generalized eigenvalue problem
\begin{equation}
    H(\theta) v = \lambda S(\theta) v
    \label{eq:generalized_H_eig}
\end{equation}
therefore coincides with the Schrödinger problem projected onto $\mathcal V_\theta$. Equivalently, the generalized eigenvalues of $(H(\theta),S(\theta))$ are the Ritz values of $\hat H$ in the tangent space. Ordering them as $\lambda_{\min}(\theta)=\lambda_1(\theta)\le\lambda_2(\theta)\le\cdots$, the smallest one satisfies
\begin{equation}
    \lambda_{\min}(\theta)
    =
    \min_{0\neq \varphi\in\mathcal V_\theta}
    \frac{\mel{\varphi}{\hat H}{\varphi}}{\braket{\varphi}{\varphi}},
    \label{eq:lambda_min_rayleigh}
\end{equation}
and thus gives the lowest Ritz value in the tangent space around $\hat\psi_\theta$. At convergence, $\hat\psi_{\theta_*}=\psi_0$, the tangent space is orthogonal to the ground state and therefore lies entirely in the excited-state sector. Accordingly, any nonzero $\varphi\in\mathcal V_{\theta_*}$ can be expanded as $\ket{\varphi}=\sum_{n\ge1} c_n\ket{E_n}$, and Eq.~\eqref{eq:lambda_min_rayleigh} reduces to
\begin{equation}
    \lambda_{\min}(\theta_*)
    =
    \min_{0\neq\varphi\in\mathcal V_{\theta_*}}
    \frac{\sum_{n\ge1}|c_n|^2 E_n}{\sum_{n\ge1}|c_n|^2}
    \ge E_1,
\end{equation}
with $E_1$ the first excited-state energy. The bound is saturated if the tangent space contains a nonzero vector in the first-excited eigenspace.

Figure~\ref{fig:RQI_J1_H_matrix} shows this behavior directly. In the full-sum calculation, the variational energy decreases toward $E_0$, whereas $\lambda_{\min}$ rapidly approaches $E_1$ and then remains near it. Once the ground-state direction is excluded from the tangent space, the lowest mode of the projected problem is controlled by the lowest excited-state direction contained in that space. 

\begin{figure}[t]
    \centering
    \includegraphics[width=0.7\textwidth]{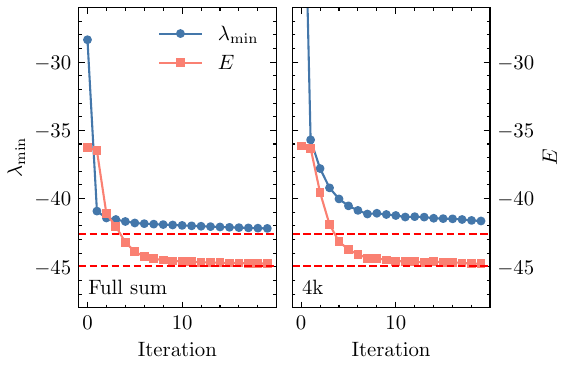}
    \caption{\textbf{Projected tangent-space spectrum during optimization on the $4\times4$ Heisenberg model.}
    Variational energy $E$ and smallest generalized eigenvalue $\lambda_{\min}$ of the matrix pair $(H(\theta),S(\theta))$ during optimization with a real-valued RBM ansatz of hidden-unit density $\alpha=4$. All runs use the same optimization hyperparameters, $\tau=1.4E_0$, $\lambda=0.1$, and $\eta=1.0$.
    \textbf{a}, Full-sum estimator.
    \textbf{b,} Monte Carlo estimator with $4$k samples.}
    \label{fig:RQI_J1_H_matrix}
\end{figure}

\section{Stochastic Reconfiguration as Power Iteration}\label{sec:sr_is_power_iter}
Here, we recall the connection between power iteration and Riemannian gradient descent with retraction by normalization. The functional update for Riemannian gradient descent can be re-written as
\begin{align*}
    \psi_{k+1}
    &=
    \frac{\psi_k - 2\eta(\hat H\psi_k - E(\psi_k)\psi_k)}{\| \psi_k - 2\eta(\hat H\psi_k - E(\psi_k)\psi_k) \|}
    \\
    &=
    \frac{(\tau_k I - \hat H)\psi_k}{\|(\tau_k I - \hat H)\psi_k\|},
\end{align*}
where $\tau_k = E(\psi_k) + \frac{1}{2\eta}$. The latter is one step of power iteration for the operator $\tau_k - \hat H$, which for a suitable choice of $\eta$ shifts the Hamiltonian such that we can compute the ground state with a power iteration. 

\section{PII-SPRING and MinPII-SPRING}\label{sec:spring_pii}

\begin{figure}[t]
  \centering
  \includegraphics[width=0.6\textwidth]{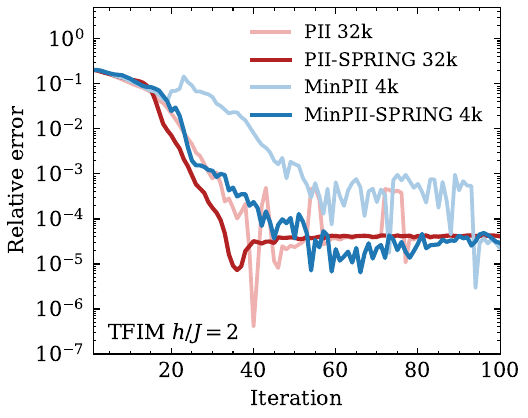}
  \caption{\textbf{SPRING accelerates convergence on the $10\times10$ TFIM for small sample size.}
  Relative error as a function of iteration for PII and PII-SPRING (MinPII and MinPII-SPRING) on the $10\times10$ TFIM at $h/J=2$ in a real-valued ViT ansatz. All runs use the same optimization hyperparameters, $\tau=1.1E_0$, $\lambda=10^{-3}$, and $\eta=1.0$. For SPRING, we used $\mu=0.8$. }
  \label{fig:pii_spring}
\end{figure}

In this section, we study whether the PII-SPRING algorithm improves the performance of PII. Fig.~\ref{fig:pii_spring} shows that the advantage of PII-SPRING is obvious in the small sample size regime. For $M=32$k, PII-SPRING exhibits a slightly faster convergence rate than PII, while the two methods become comparable at later iterations. The difference is much more pronounced for $M=4$k, where MinPII-SPRING reaches relative errors of order $10^{-4}$ substantially earlier than MinPII. This behaviour is expected when the number of samples is smaller than the number of variational parameters. The sampled linear system is then underdetermined and gives only a rough approximation to the full projected update. Updates based on the current samples alone are then more sensitive to finite-sampling noise. SPRING partly compensates for this by reusing information from last step. When the number of samples is much larger than the number of variational parameters, the sampled least-squares problem is overdetermined and typically close to uniquely determined, leaving less room for SPRING to improve the update direction.

\section{Inverse Iteration as a Riemannian Newton Method}\label{inverse_iter_as_newton}
The Riemannian Newton method with retraction by normalization is given by the recursion
\begin{equation}\label{eq:riemannian_newton}
    \psi_{k+1} 
    =
    \frac
    {
        \psi_k - \operatorname{Hess}_{\mathbb{S}} E(\psi_k)^{-1} [\operatorname{grad}_{\mathbb{S}} E(\psi_k)]
    }
    {
        \|\psi_k - \operatorname{Hess}_{\mathbb{S}} E(\psi_k)^{-1} [\operatorname{grad}_{\mathbb{S}} E(\psi_k)]\|
    }.
\end{equation}
Here, for tangent vectors $v,w\in T_\psi\mathbb S$ the Riemannian Hessian and gradient are given by
\begin{align*}
    \operatorname{Hess}_{\mathbb{S}} E(\psi)(v,w) &= 2\big[ \langle (\hat H - E(\psi)I)v, w \rangle \big] \\
    \operatorname{grad}_{\mathbb{S}} E(\psi) &= 2\big[ \hat H\psi - E(\psi)\psi \big]
\end{align*}
If $E(\psi) \notin \sigma(\hat H)$, the solution $d \in T_\psi \mathbb{S}$ to $\operatorname{Hess} E(\psi)(d, \cdot) = \langle \operatorname{grad} E(\psi), \cdot \rangle$ is given by
\[
    d = \psi - \frac{(\hat H - E(\psi)I)^{-1} \psi}{\langle \psi, (\hat H - E(\psi)I)^{-1} \psi \rangle}.
\]
This can be seen by checking that $d$ is a tangent vector, i.e., it satisfies $\langle d, \psi \rangle = 1 - 1 = 0$ and that $d$ in fact solves the correct equation $\langle (\hat H - E(\psi)I)d, w \rangle = \langle (\hat H - E(\psi)I)\psi, w \rangle$.
By using the explicit formula for $d$ in the Riemannian Newton method above, we find
\[
    \psi_{k+1} 
    =
    \frac{\psi_k - d_k}{\| \psi_k - d_k \|} 
    =
    \frac{(\hat H - E(\psi_k)I)^{-1} \psi_k}{\| (\hat H - E(\psi_k)I)^{-1} \psi_k \|}
\]
This is iterative scheme is known as Rayleigh Quotient Iteration \cite{trefethen2022numerical}. If we modify the Riemannian Hessian to be $\hat H - \tau I$ (acting only on tangent vectors) the same derivation as above shows that we obtain shifted Inverse Iteration with shift $\tau$
\begin{equation}\label{eq:inverse_iteration}
    \psi_{k+1} 
    =
    \frac{(\hat H - \tau I)^{-1} \psi_k}{\| (\hat H - \tau I)^{-1} \psi_k \|}
\end{equation}
which is well-suited when a priori information on $E_0$ is at hand, in which case we set $\tau \approx E_0$.

\section{Spectral gap analysis}
\begin{figure}[t]
  \centering
  \includegraphics[width=0.6\textwidth]{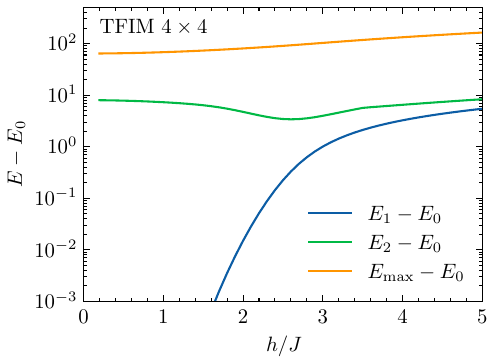}
  \caption{\textbf{Spectral gap and spread of the $4\times4$ TFIM versus $h/J$ obtained with ED.}  }
  \label{fig:energy_differences}
\end{figure}

To understand the different convergence behavior of SR and PII, we compute the spectrum of the $4\times4$ TFIM as a function of $h/J$ using ED. We focus on the first gap $E_1-E_0$, the second gap $E_2-E_0$, and the spectral width $E_{\max}-E_0$.
Figure~\ref{fig:energy_differences} clarifies the spectral regime relevant for SR optimization. At small $h / J$, the first gap $\Delta =E_1-E_0$ is strongly suppressed and falls below $10^{-3}$, making the ground state and first excited state effectively degenerate to VMC. Consequently, the optimizations are not governed solely by the lowest excitation gap; the second excitation gap $E_2-E_0$ dominates the convergence. This motivates our choice of $h / J=2$ for the comparison. Over the same range, $E_2-E_0$ remains much larger and reaches its minimum only around $h / J \approx 2.6$, whereas the spectral spread $E_{\max }-E_0$ stays large.

\section{Details of Mathematical Derivations}
In this section, we provide the proofs of the convergence results for inverse iteration and Riemannian gradient descent stated in the main text.

\theoreminverseiter*

\begin{proof}
    Denote the shifted inverse by $T_\tau = (\hat H - \tau I)^{-1}$. The assumption $\tau < \frac12(E_0 + E_1)$ guarantees that the spectrum of $T_\tau$ attains its maximum in $\mu_0 = (E_0 - \tau)^{-1}$.
    Using this notation, we write $\psi_k = c_k T_\tau\psi_0$ with $c_k = \|T^k_\tau\|^{-1}$. Moreover, we split $\psi_0 = P_0\psi_0 + P_0^\perp\psi_0 = v_0 + w_0$. Then it holds
    \begin{equation*}
        \psi_k = c_k[\mu_0^k + T_\tau^k w_0].
    \end{equation*}
    We use this to compute
    \begin{align*}
        \left\| 
            \psi_k - \operatorname{sign}(\mu_0)^k \frac{v_0}{\|v_0\|}
        \right\|^2
        &
        =
        \left\|
            c_k \mu_0^k v_0 + c_k T^k_\tau w_0 - \operatorname{sign}(\mu_0)^k \frac{v_0}{\|v_0\|}
        \right\|^2
        \\
        &=
        \left| 
            c_k |\mu_0|^k\|v_0\| - 1
        \right|^2
        +
        c_k^2 \| T_\tau^k w_0 \|^2.
    \end{align*}
    We estimate both terms above separately and start with the second. To this end, we note that it holds
    \begin{equation*}
        c_k^2
        =
        \frac{1}{\| T_\tau^k\psi_0 \|^2}
        =
        \frac{1}{\| T_\tau^k v_0 \|^2 + \|T_\tau^k w_0\|^2}
        \leq 
        \frac{1}{|\mu_0|^{2k}\|v_0\|^2}.
    \end{equation*}
    This then yields
    \begin{equation*}
        c_k^2 \| T_\tau^k w_0 \|^2
        \leq
        \frac{|\mu_1|^{2k}\|w_0\|^2}{|\mu_0|^{2k}\|v_0\|^2}.
    \end{equation*}
    To treat the second term, we need the following equality
    \begin{equation*}
        \| T_\tau^k \psi_0 \|
        =
        \sqrt{|\mu_0|^{2k}\|v_0\| + \| T_\tau^k w_0 \|^2}.
    \end{equation*}
    Using this, we compute
    \begin{align*}
        \left(
            \frac{|\mu_0|^k \|v_0\|}{\|T_\tau^k\psi_0\|} - 1
        \right)^2
        &=
        \left(
            \frac{|\mu_0|^k \|v_0\|}{\sqrt{|\mu_0|^{2k}\|v_0\| + \| T_\tau^k w_0 \|^2}} - 1
        \right)^2
        \\
        &=
        \left(
            \frac{1}{\sqrt{ 1 + \left( \frac{\| T_\tau w_0 \|}{|\mu_0|^k \|v_0\|} \right)^2}} - 1
        \right)^2
        \\
        &\leq
        \left( 
            \frac{\|T_\tau^k w_0\|}{|\mu_0|^k \|v_0\|}
        \right)^2
        \\
        &=
        \frac{|\mu_1|^{2k}\|w_0\|^2}{|\mu_0|^{2k}\| v_0 \|^2}.
    \end{align*}
    For the inequality above, we used the squared version of the inequality $1 - \frac{1}{\sqrt{1 + x^2}} \leq x$. This can easily be seen to hold true by the fact that the function $f(x)= x+ \frac{1}{\sqrt{1+x^2}} - 1$ satisfies $f(0)=0$ and $f'(x)\geq 0$ for all $x\geq 0$ and thus it holds $f(x)\geq 0$ for all $x \geq 0$. Using the two estimates for the terms above, we get
    \begin{align*}
        \left\|
            \psi_k - \operatorname{sign}(\mu_0)^k\frac{v_0}{\|v_0\|}
        \right\|
        &
        \leq 
        \sqrt{2}\frac{\|w_0\|}{\|v_0\|}\left| \frac{\mu_1}{\mu_0} \right|^k
        \\
        &
        =
        \sqrt{2}\frac{\|w_0\|}{\|v_0\|}\left| \frac{E_0 - \tau}{E_1 - \tau} \right|^k
        \\
        &
        =
        \sqrt{2}\frac{\|w_0\|}{\|v_0\|}\left| 1 - \frac{E_1 - E_0}{E_1 - \tau} \right|^k.
    \end{align*}
\end{proof}

We now turn to the convergence proof for Riemannian gradient descent. As a first step, we prove a descent lemma.

\begin{lemma}[Descent Lemma]\label{eq:descent_lemma}
    For this lemma let $E_0\leq E_1 \leq \dots \leq E_{\textup{max}}$ denote the eigenvalues of $\hat H$ counted with multiplicities and denote by $v_0, v_1,\dots$ an orthonormal basis consisting of eigenvectors. 
    Note that degenerate ground states are still fully covered by this setting.
    Let again $(\psi_k)_{k\in\mathbb N}$ denote the iterates of Riemannian gradient descent with retraction by normalization. Then it holds
    \begin{equation}
        E(\psi_{k+1}) - E(\psi_k)
        =
        -\frac{4\eta}{\|\tilde \psi_{k+1}\|^2}\sum_{i=0}
        (E_i - E(\psi_k))^2\left[ 1 - \eta(E_i - E(\psi_k)) \right]|c_i^{k}|^2,
    \end{equation}
    where 
    \begin{equation*}
        \tilde\psi_{k+1} = \psi_k - \eta\operatorname{grad}_\mathbb SE(\psi_k), 
        \quad \text{and }
        c_i^k = \braket{\psi_k}{v_i}.
    \end{equation*}
    Hence, for $\eta\leq (E_\textup{max} - E_0)^{-1} = \Gamma^{-1}$ it holds that 
    \begin{equation}
        E(\psi_{k+1}) \leq E(\psi_k). 
    \end{equation}
\end{lemma}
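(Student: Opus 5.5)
The plan is to work entirely in the orthonormal eigenbasis $v_0, v_1, \dots$ of $\hat H$ and compute everything explicitly. Write $\psi_k = \sum_i c_i^k v_i$ with $\sum_i |c_i^k|^2 = 1$, and abbreviate $E \coloneqq E(\psi_k) = \sum_i E_i |c_i^k|^2$ and $\delta_i \coloneqq E_i - E$. Since $\operatorname{grad}_\mathbb S E(\psi_k) = 2(\hat H - E)\psi_k$, the unretracted iterate is diagonal in this basis:
\begin{equation*}
    \tilde\psi_{k+1} = \sum_i \bigl(1 - 2\eta\delta_i\bigr) c_i^k v_i .
\end{equation*}
Because the Rayleigh quotient is invariant under normalization, I would compute $E(\psi_{k+1}) = \mel{\tilde\psi_{k+1}}{\hat H}{\tilde\psi_{k+1}}/\|\tilde\psi_{k+1}\|^2$ and never deal with the retraction separately.

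The key step is to subtract $E$ under the common denominator:
\begin{equation*}
    E(\psi_{k+1}) - E = \frac{1}{\|\tilde\psi_{k+1}\|^2}\sum_i \delta_i \bigl(1-2\eta\delta_i\bigr)^2 |c_i^k|^2 .
\end{equation*}
Next, expand $(1-2\eta\delta_i)^2 = 1 - 4\eta\delta_i + 4\eta^2\delta_i^2$. The zeroth-order contribution $\sum_i \delta_i|c_i^k|^2$ vanishes exactly, because $E$ is the $|c_i^k|^2$-weighted mean of the $E_i$. The remaining terms regroup as $-4\eta\sum_i \delta_i^2(1-\eta\delta_i)|c_i^k|^2$, which is the claimed identity.

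The denominator is harmless. Since $(\hat H - E)\psi_k$ is orthogonal to $\psi_k$, we have $\|\tilde\psi_{k+1}\|^2 = 1 + 4\eta^2\|(\hat H - E)\psi_k\|^2 \ge 1$, so the retraction is always well defined.

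For the monotonicity statement I would show that every summand is nonnegative. The factor $\delta_i^2|c_i^k|^2$ is clearly nonnegative. For the remaining factor, note that $E \ge E_0$ because $E$ is a convex combination of eigenvalues. Hence $\delta_i \le E_{\textup{max}} - E_0 = \Gamma$, and $\eta \le \Gamma^{-1}$ gives $1 - \eta\delta_i \ge 0$; for $\delta_i<0$ this factor exceeds $1$ anyway. The whole right-hand side is therefore nonpositive, which yields $E(\psi_{k+1}) \le E(\psi_k)$. I do not expect a real obstacle here. The only points needing care are the cancellation of the first-order term, which relies on $\psi_k$ being normalized so that $E$ is exactly the weighted mean, and the bookkeeping of the factor $2$ in the gradient, which turns the cubic term into $\eta\delta_i$ rather than $2\eta\delta_i$.
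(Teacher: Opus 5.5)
Your proof is correct and follows essentially the same route as the paper. Both expand in the eigenbasis, write $E(\psi_{k+1})-E(\psi_k)$ over the common denominator $\|\tilde\psi_{k+1}\|^2$, insert $\tilde c_i^{k+1}=(1-2\eta\delta_i)c_i^k$, and cancel the vanishing first-moment term. You also spell out two points the paper leaves implicit: $\|\tilde\psi_{k+1}\|^2=1+4\eta^2\|(\hat H-E)\psi_k\|^2\ge 1$, and $E(\psi_k)\ge E_0$ gives $1-\eta\delta_i\ge 0$ for $\eta\le\Gamma^{-1}$, which is exactly what the monotonicity claim needs.
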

\begin{proof}
    First we expand $\psi_k$ and $\tilde \psi_{k+1}$ in the orthonormal basis $(v_i)$
    \begin{equation*}
        \psi_k = \sum_{i=0} c_i^k v_i,
        \quad
        \tilde \psi_{k+1} = \sum_{i=0}\tilde c^{k+1}_i v_i.
    \end{equation*}
    Note that we can write
    \begin{equation*}
        E(\psi_k) = \frac{\sum_{i=0}|\tilde c_i^{k+1}|^2}{\| \tilde \psi_{k+1} \|^2} E(\psi_k),
        \quad
        E(\psi_{k+1}) = \frac{1}{\|\tilde \psi_{k+1}\|^2}\sum_{i=0}E_i|\tilde c_i^{k+1}|^2
    \end{equation*}
    and thus it follows
    \begin{equation*}
        E(\psi_{k+1}) - E(\psi_k)
        =
        \frac{1}{\|\tilde \psi_{k+1}\|^2} \sum_{i=0} (E_i - E(\psi_k))|\tilde c_i^{k+1}|^2
    \end{equation*}
    Note that $\tilde c_i^{k+1}= [1 - 2\eta(E_i - E(\psi_k))]c_i^k$, and inserting in the equality above, using the abbreviation $\Delta_i = E_i - E(\psi_k)$ it follows
    \begin{align*}
        E(\psi_{k+1}) - E(\psi_k)
        &=
        \frac{1}{\|\tilde \psi_{k+1}\|^2}\sum_{i=0}\Delta_i(1 - 2\eta\Delta_i)^2|c_i^k|^2
        \\
        &=
        \frac{1}{\|\tilde \psi_{k+1}\|^2}\sum_{i=0}(\Delta_i - 4\eta\Delta_i^2 + 4\eta^2\Delta_i^3 )|c_i^k|^2
        \\
        &=
        \frac{1}{\|\tilde \psi_{k+1}\|^2}
        \left[
            \sum_{i=0}\Delta_i|c_i^k|^2 - 4\eta\sum_{i=0}\Delta_i^2 |c_i^k|^2 + 4\eta^2\sum_{i=0}\Delta_i^3|c_i^k|^2
        \right].
    \end{align*}
    This is a (weighted) expansion in terms of the first three moments. The first moment vanishes, as can be seen by the following computation
    \begin{equation*}
        \sum_{i=0}\Delta_i|c_i^k|^2 
        =
        \sum_{i=0}(E_i - E(\psi_k))|c_i^k|^2 = E(\psi_k) - E(\psi_k) = 0.
    \end{equation*}
    Rearranging yields the assertion.
\end{proof}

We can now proceed to prove the convergence result for Riemannian gradient descent.

\theoremriemanniangradient*

\begin{proof}
    The proof proceeds in three steps. We begin by showing the invariance of the initial ground state component first, analyze the optimization dynamics along the eigenvectors in a second step, and finally derive the $L^2$ error decay rate.

    \paragraph{First Step: Invariance of the Ground State Direction}
    An un-normalized gradient descent step is of the form
    \begin{equation*}
        \tilde \psi_{k+1} = \psi_k - 2\eta[\hat H \psi_k - E(\psi_k)\psi_k].
    \end{equation*}
    We apply $P_0$ to both sides of this equation and use that $P_0$ commutes with the Hamiltonian
    \begin{align*}
        P_0\tilde\psi_{k+1}
        &=
        P_0\psi_{k} - 2\eta(\hat HP_0\psi_k - E(\psi_k)P_0\psi_k)
        \\
        &=
        [1 - 2\eta(E_0 - E(\psi_k))]P_0\psi_k.
    \end{align*}
    Note that $[1 - 2\eta(E_0 - E(\psi_k))]$ is a positive scalar which we abbreviate with $s_k$. We then compute
    \begin{equation*}
        \frac{P_0\psi_{k+1}}{\| P_0\psi_{k+1} \|}
        =
        \frac{P_0\tilde\psi_{k+1}}{\| P_0\tilde\psi_{k+1} \|}
        =
        \frac{s_kP_0\psi_{k}}{s_k\| P_0\psi_{k} \|}
        =
        \frac{P_0\psi_{k}}{\| P_0\psi_{k} \|}.
    \end{equation*}
    Applying this equality $k$-times yields the invariance
    \begin{equation}\label{eq:auxiliary_invariance_RGD}
        \frac{P_0\psi_{k}}{\| P_0\psi_{k} \|}
        =
        \frac{P_0\psi_{0}}{\| P_0\psi_{0} \|}
        \eqqcolon\psi^*.
    \end{equation}
    
    \paragraph{Second Step: Optimization Dynamics Along Eigenvectors}
    We write $P_0^\perp \psi_k = \sum_{i=d_0}^\text{max}c_i^{k}v_i$, where $d_0=\textup{dim}(V_0)$.
    As a first step we will show 
    \begin{equation}\label{eq:aux_equation_coefficient_ratio}
        \frac{|c_i^{k+1}|}{\|P_0\psi_{k+1}\|}
        =
        \left| 
            \frac{1 - 2\eta(E_i - E(\psi_k))}{1 - 2 \eta(E_0 - E(\psi_k))}
        \right|
        \frac{|c_i^k|}{\|P_0\psi_k\|}.
    \end{equation}
    To show this, we set
    \begin{equation*}
        \tilde \psi_{k+1} = \psi_k - 2\eta(\hat H \psi_k - E(\psi_k)\psi_k)
    \end{equation*}
    and write
    \begin{equation*}
        P_0^\perp\tilde \psi_{k+1} = \sum_{i=1}^{\text{max}}\tilde c_i^{k+1}v_i.
    \end{equation*}
    Note that it holds $\tilde c_i^{k+1}/\|\tilde \psi_{k+1}\| = c_i^{k+1}$. Applying $P_0^\perp$ to $\tilde\psi_{k+1}$ and using that the Hamiltonian commutes with the projection yields
    \begin{equation*}
        P_0^\perp \tilde\psi_{k+1}
        =
        P_0^\perp\psi_k - 2\eta(\hat H P_0^\perp\psi_k - E(\psi_k)P_0^\perp\psi_k).
    \end{equation*}
    Now we take the inner product with the eigenvector $v_i$, which gives
    \begin{align*}
        \tilde c_i^{k+1} &= c_i^k - 2\eta(E_ic_i^k - E(\psi_k)c_i^k)
        \\
        &=
        [1 - 2\eta(E_i - E(\psi_k))]c_i^k.
    \end{align*}
    We also apply $P_0$ to $\tilde\psi_{k+1}$ and obtain
    \begin{equation*}
        P_0\tilde\psi_{k+1} = [1 - 2\eta(E_0 - E(\psi_k))]P_0\psi_k.
    \end{equation*}
    We obtain 
    \begin{align*}
        \frac{|c_i^{k+1}|}{\|P_0\psi_{k+1}\|}
        =
        \frac{|\tilde c_i^{k+1}|\cdot \lVert \tilde{\psi}_{k+1} \rVert}{\|P_0\tilde{\psi}_{k+1}\|\cdot \lVert \tilde{\psi}_{k+1} \rVert}
        =
        \frac{|\tilde c_i^{k+1}|}{\|P_0\tilde\psi_{k+1}\|}
        =
        \left| 
            \frac{1 - 2\eta(E_i - E(\psi_k))}{1 - 2 \eta(E_0 - E(\psi_k))}
        \right|
        \frac{|c_i^k|}{\|P_0\psi_k\|}.
    \end{align*}
    Inserting the step-size $\eta = 1/(2(E_{\text{max}}-E_0))$ we get
    \begin{align*}
        \frac{1 - 2\eta(E_i - E(\psi_k))}{1 - 2\eta(E_0 - E(\psi_k)}
        &=
        1 - \frac{E_i - E_0}{E_{\text{max}} - E_0 + E(\psi_k) - E_0}
        \\
        &\leq
        1 - \frac{E_1 - E_0}{E_{\text{max}} - E_0 + E(\psi_0) - E_0}
        \\
        &\leq
        1 - \frac12\frac{E_1 - E_0}{E_{\text{max}} - E_0} \eqqcolon \rho
    \end{align*}
    where we used $E_i \geq E_1$ and $E(\psi_0) \leq E(\psi_k)$. The latter holds whenever $\eta < 1/(E_\text{max}-E_0)$. Now we use \eqref{eq:aux_equation_coefficient_ratio} $k$-times 
    \begin{equation*}
        \frac{|c_i^k|^2}{\|P_0\psi_k\|^2} \leq \rho^{2k}\frac{|c_i^0|^2}{\|P_0\psi_0\|^2}
    \end{equation*}
    and then sum over $i=1,\dots,\text{max}$ to obtain
    \begin{equation*}
        \frac{\| P_0^\perp\psi_k \|^2}{\|P_0\psi_k\|^2} \leq \rho^{2k}\frac{\| P_0^\perp\psi_0 \|^2}{\|P_0\psi_0\|^2}.
    \end{equation*}

    \paragraph{Convergence in Norm}
    It remains to show that the above implies convergence in norm with the rate $\rho$. We begin by writing 
    \begin{align*}
        \psi_k &= P_0\psi_k + P_0^\perp \psi_k
        \\
        &=
        \|P_0\psi_k\|\psi^* + P_0^\perp\psi_k.
    \end{align*}
    Then, using the above, we compute
    \begin{align*}
        \|\psi_k - \psi^*\|
        &=
        \| (\|P_0\psi_k\| - 1)\psi^* + P_0^\perp\psi_k \|^2
        \\
        &=
        (1 - \|P_0\psi_k\|)^2 + \| P_0^\perp \psi_k \|^2
        \\
        &=
        1 - 2\|P_0\psi_k\| + \|P_0\psi_k\|^2 + 1 - \|P_0\psi_k\|^2
        \\
        &=
        2(1 - \| P_0\psi_k \|)
        \\
        &\leq 
        2(1 - \|P_0\psi_k\|^2)
        \\
        &=
        2\|P_0^\perp \psi_k\|,
    \end{align*}
    where we used the inequality $1-x \leq 1- x^2$ which is valid for $x\in[0,1]$. Finally, we can put it all together by using the final estimate of the second step to obtain
    \begin{equation*}
        \|\psi_k - \psi^*\|^2
        \leq 
        2\|P_0\psi_k\|^2\frac{\|P_0^\perp\psi_0\|^2}{\|P_0\psi_0\|^2}\rho^{2k}
        \leq
        2\frac{\|P_0^\perp\psi_0\|^2}{\|P_0\psi_0\|^2}\rho^{2k}.
    \end{equation*}
\end{proof}

\section{Graphical Visualizations}\label{sec:graphical_visualization}

We provide graphical visualizations of the convergence results for inverse iteration and stochastic reconfiguration. Similar to Figure~\ref{fig:overview}, we use the Hamiltonian
\begin{equation*}
    \hat H = \operatorname{diag}(a, 10, 0):\mathbb R^3 \to \mathbb R^3,
\end{equation*}
where here $a \in(0,10)$ is a free parameter and coincides with the spectral gap $\Delta$ of the Hamiltonian $\hat H$. Figure~\ref{fig:overview} reports the results for $a=1$. The spectral spread $\Gamma=10$ is fixed for all the considered experiments. In the following, we illustrate the effect of suboptimal spectral shifts for PII, the sensitivity of SR to step-size choices, and the effect of closing spectral gaps for SR.

\paragraph{Effect of closing Spectral Gaps}

\begin{figure}
    \centering
    \includegraphics[width=\textwidth]{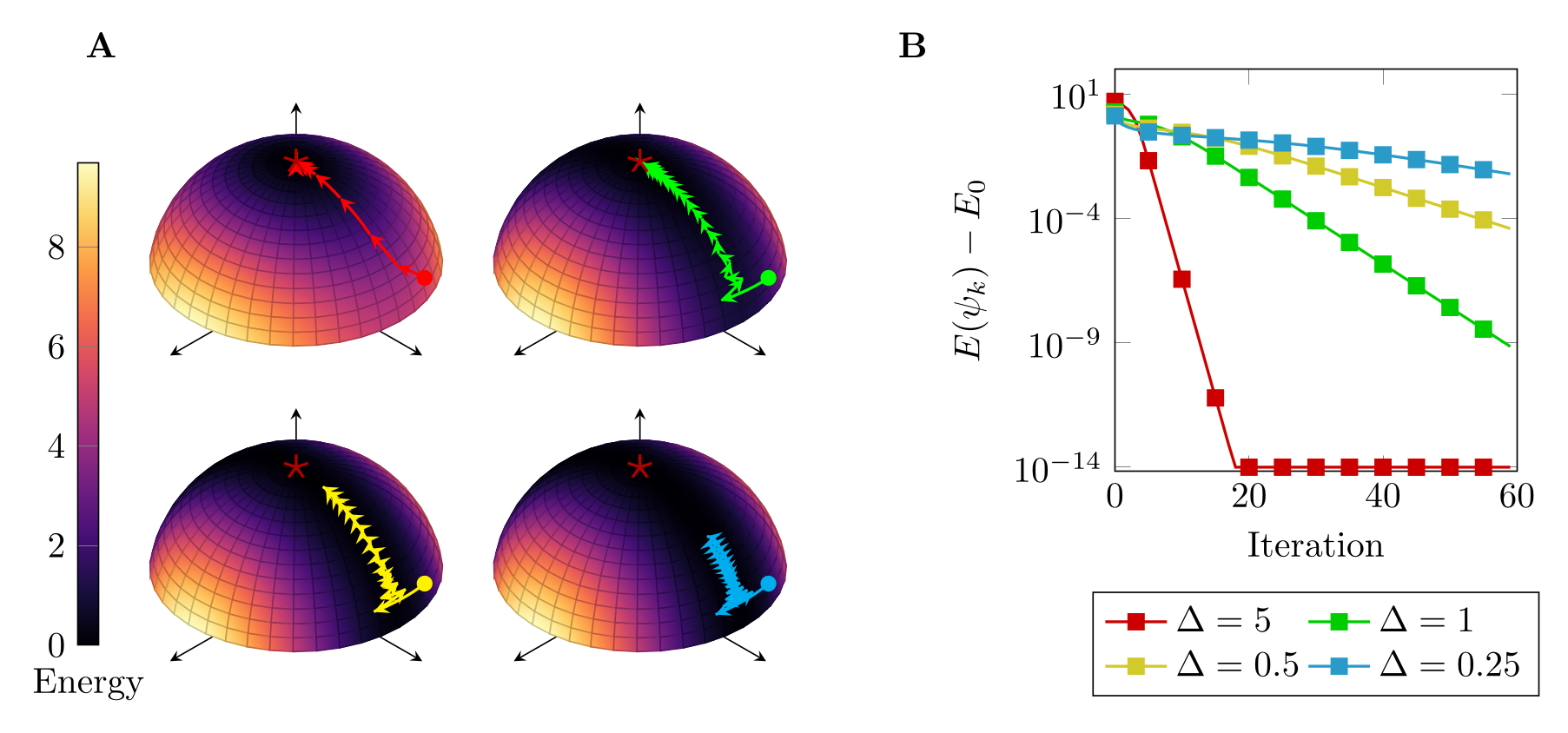}
    \caption{
    Illustration of varying spectral gaps for SR. A smaller spectral gap leads to slower convergence and oscillatory behavior for SR.
    }
    \label{fig:gap_sensitivity}
\end{figure}

In Figure~\ref{fig:gap_sensitivity}, we illustrate the sensitivity of SR to closing spectral gaps. To this end, we vary the parameter $a$. We consider the values $a=5, 1, 0.5$ and $a=0.25$. In this experiment, we choose the learning rate $\eta^*$ to be the (asymptotically) optimal one among the constant learning rate schedules. It is given by
\begin{equation*}
    \eta^* = \frac{1}{E_{\textup{max}} + E_1 - 2 E_0}
\end{equation*}
and solves
\begin{equation*}
    \eta^* = \underset{\eta >0}{\operatorname{argmin}}\left[ \max_{i}|1 - 2\eta(E_i - E_0)| \right]
\end{equation*}
The quantity $|1 - 2\eta(E_i - E_0)|$ is obtained from Eq.~\eqref{eq:aux_equation_coefficient_ratio}, and inserting $E_0$ in place of $E(\psi_k)$. Note that in many scenarios, this step-size is not available as it requires exact knowledge of $E_0$, $E_1$ and $E_{\textup{max}}$. Using $\eta^*$ here however, illustrates the drastically degrading convergence rates of SR a forteriori; even if the optimal step-size was known or found through an expensive hyperparameter tuning, SR can only resolve small-gaped systems through exceedingly expensive computations. This is in direct accordance with Theorem~\ref{thm:convergence_RG}, that states that the error reduction per iteration measured in $L^2$ norm is given by
\begin{equation*}
    \rho_{\text{SR}} = \left| 1 - \frac12\frac{E_1 - E_0}{E_{\textup{max}} - E_0} \right| = \frac{20-a}{20}.
\end{equation*}
Although the above result holds for the slightly different step-size $\eta = 1/(2(E_{\textup{max}}-E_0))$, it describes the qualitative behavior observed in Figure~\ref{fig:gap_sensitivity} well. We used the step-size $\eta^*$ to show that even with an optimal choice, SR can not circumvent a slow-down of convergence in the presence of closing spectral gaps. 

\paragraph{Suboptimal Spectral Shifts in Inverse Iteration}
\begin{figure}
    \centering
    \includegraphics[width=\textwidth]{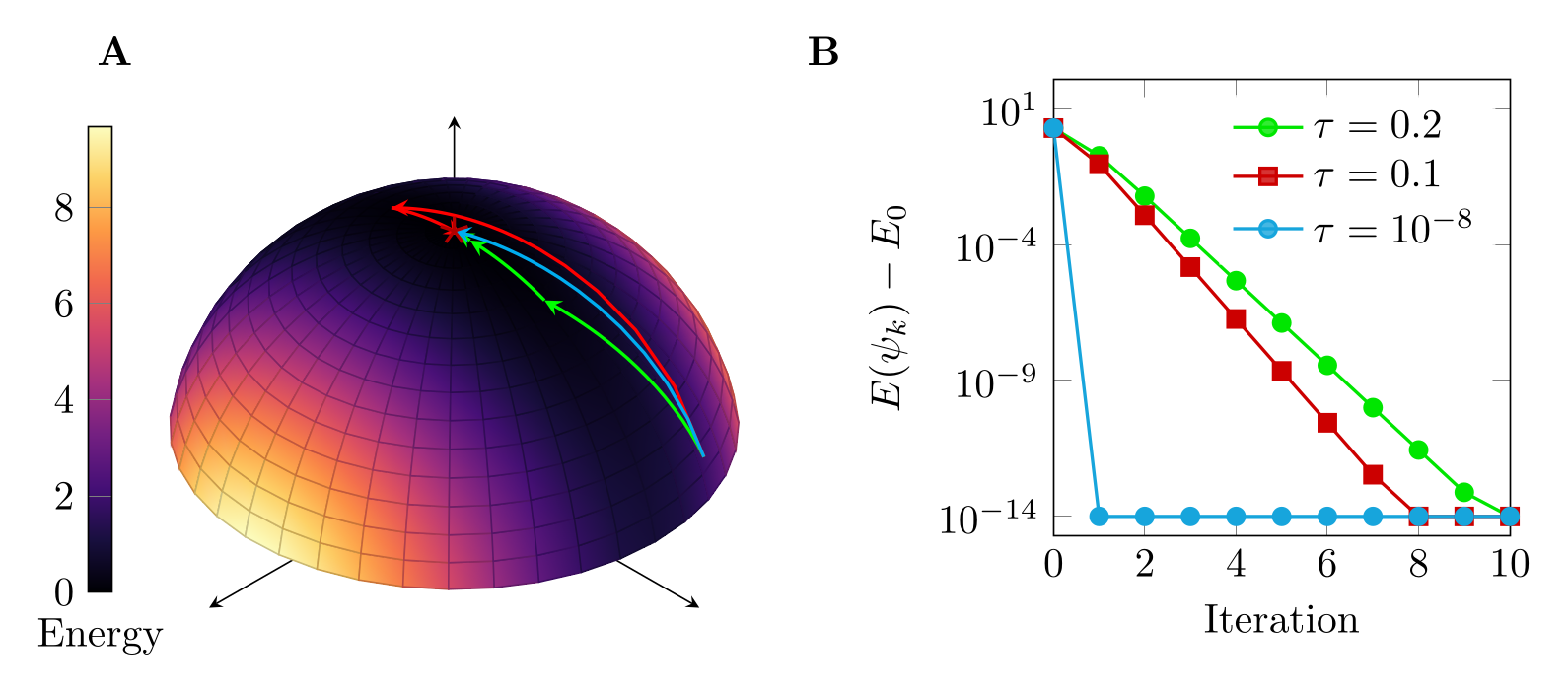}
    \caption{
    Visualization of PII for different shift values $\tau$ for the Hamiltonian
    $\hat H = \operatorname{diag}(1, 10, 0)$ and a linear normalized state vector on the sphere $\mathbb{S}$. Figure A visualizes the optimization dynamics on the sphere, and Figure B shows the corresponding energy error.}
    \label{fig:undershoot}
\end{figure}

We illustrate the effect of gradually varying the shift $\tau$, away from the optimal value $\tau=0$. In Figure \ref{fig:undershoot}, we show the optimization trajectories and the energy errors for $\tau=0.2, 0.1$ and $\tau=10^{-8}$. We clearly see linear convergence in all three cases, with a deteriorating convergence speed as $\tau$ is further away from the optimal value $\tau=0$. This is in agreement with Theorem~\ref{theorem:convergence_shifted_inverse_iteration}, that states that the error reduction per iteration measured in $L^2$ norm is given by ($E_0 = 0$)
\begin{equation*}
    \rho_{\text{PII}} = \left| 1 - \frac{E_1 - E_0}{E_1 - \tau} \right| = \frac{\tau}{1 - \tau}.
\end{equation*}
Section \ref{sec:undershoot} shows that these findings also translate to nonlinearly parametrized wavefunctions, where \emph{undershooting} the optimal $\tau$ also led to an improved robustness of PII in its Monte Carlo variant.

\paragraph{Stability of SR with respect to Step Sizes}
\begin{figure}
    \centering
    \includegraphics[width=\textwidth]{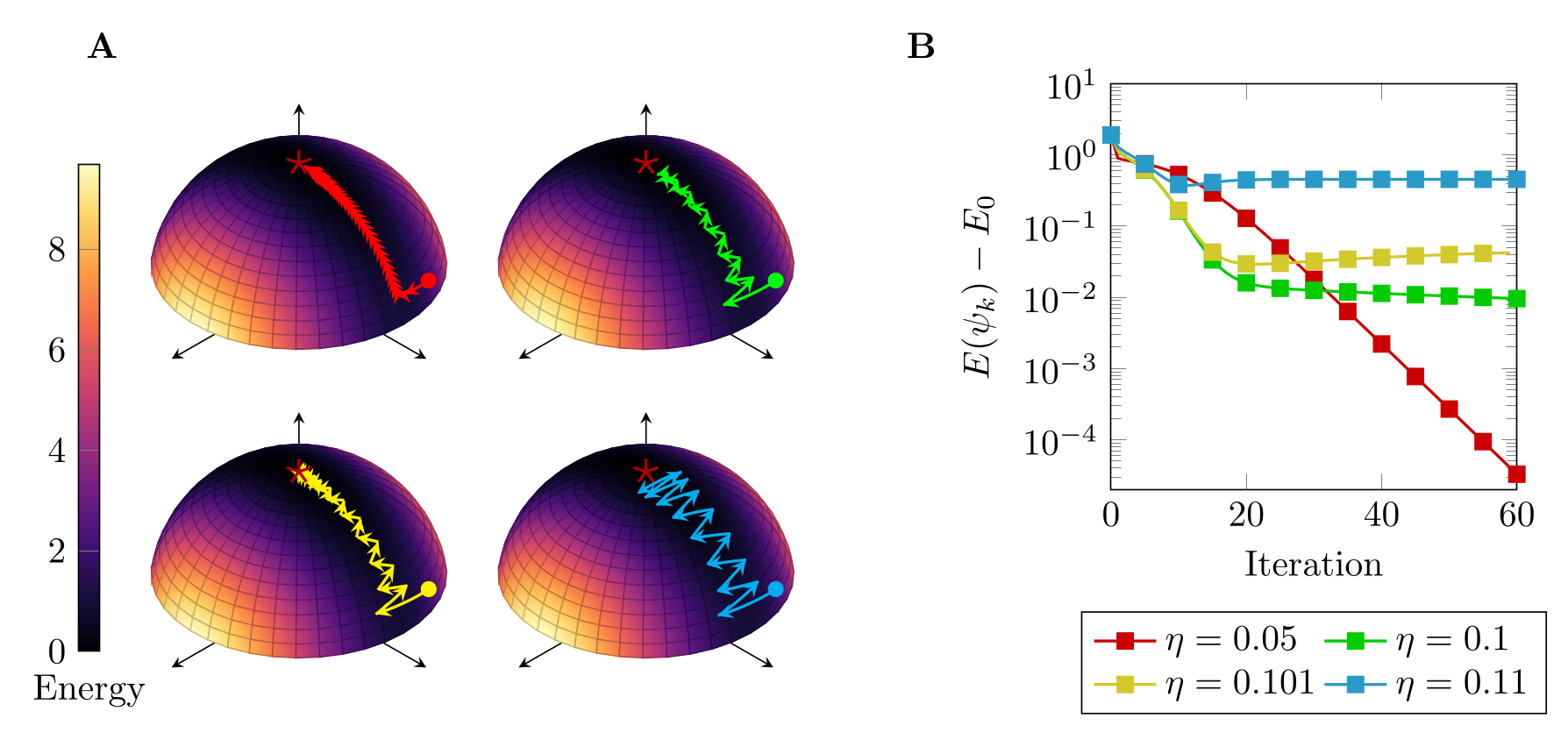}
    \caption{
    Illustration of varying the step sizes for SR. Step-sizes exceeding the critical step-size $\eta=\Gamma^{-1} = 0.1$ prevent the algorithm from convergence.
    }
    \label{fig:stability}
\end{figure}

Finally, we illustrate the role of the critical step-size value $\eta=1/(E_{\textup{max}}-E_0) = 1/\Gamma$. Theorem~\ref{thm:convergence_RG} states that SR converges only if $\eta<1/\Gamma$. To visualize this threshold, we plot the optimization trajectories and corresponding energy errors for the Hamiltonian $\hat H = \operatorname{diag}(1, 10, 0)$ in Figure~\ref{fig:stability}. There, we set $\eta=0.05, 0.1, 0.01$ and $\eta=0.11$. We clearly see that the two choices exceeding the threshold of 0.1 do not converge, and even lead to a non-monotone behavior of the iterates $\{ E(\psi_k) \}$ in the later stages of the optimization. Such a behavior is expected also in the case of more complicated quantum systems and quantum states parametrized as neural networks, and shows that both a large spectral spread $\Gamma$, and a small spectral gap $\Delta$ lead to a slowdown of SR.

\end{appendices}

\bibliography{biblio}

\end{document}